\newtheorem{theorem}{Theorem}
\newtheorem{definition}{Definition}
\newtheorem{lemma}{Lemma}
\newtheorem{proposition}{Proposition}
\newtheorem{assumption}{Assumption}
\newtheorem{remark}{\bf Remark}
\def\qed{$\Box$}
\def\proof{\noindent{\emph{Proof:} }}
\def
\def\endproof{\hspace*{\fill}~\qed\par\endtrivlist\vskip3pt}
\def\phi{\varphi}
\def\l{\left}
\def\r{\right}
\def\({\left(}
\def\){\right)}
\def\bs{{\mathbf{s}}}
\def\bw{{\mathbf{w}}}
\def\bx{{\mathbf{x}}}
\def\b0{{\mathbf{0}}}
\def\bF{{\mathbf{F}}}
\def\bI{{\mathbf{I}}}
\def\bU{{\mathbf{U}}}
\newcommand{\nn}{\nonumber}
\def\BibTeX{{\rm B\kern-.05em{\sc i\kern-.025em b}\kern-.08em
    T\kern-.1667em\lower.7ex\hbox{E}\kern-.125emX}}
\begin{document}

\title{Capacity of Remote Classification Over Wireless Channels}

\author{\IEEEauthorblockN{Qiao Lan},
\and
\IEEEauthorblockN{Yuqing Du},
\and
\IEEEauthorblockN{Petar Popovski},
\and and
\IEEEauthorblockN{Kaibin Huang} 
\thanks{\setlength{\baselineskip}{13pt} \noindent Q. Lan, Y. Du, and K. Huang are with the Dept. of Electrical and Electronic Engineering at The  University of  Hong Kong, Hong Kong (Email: qlan@eee.hku.hk, yqdu@eee.hku.hk, huangkb@eee.hku.hk). Petar Popovski is with the Dept. of Electronic Systems at the Aalborg University, Aalborg, Denmark (Email: petarp@es.aau.dk). Corresponding author: K. Huang. } }

\maketitle

\vspace{-5.5mm}
\begin{abstract}
Wireless connectivity creates a computing paradigm that merges communication and inference. A basic operation in this paradigm is the one where a device offloads classification tasks, such as object recognition, to the edge serves. We term this 
\emph{remote classification}, with a potential to enable  many intelligent applications ranging from autonomous driving to augmented reality. Remote classification is challenged by the finite and variable data rate of the wireless channel, which affects the capability to transfer high-dimensional features and thus limits the classification resolution.
We introduce a set of metrics under the name of \emph{classification capacity} that are defined as the maximum number of classes that can be discerned over a given communication channel while meeting a target probability for classification error. We treat both the cases of a channel where the instantaneous rate is known  and unknown. The objective is to choose a subset of classes from a class library that offers satisfactory performance over a given channel. We treat two different cases of subset selection. \emph{First,} a device can select the subset by pruning the class library until arriving at a subset that meets the targeted error probability while maximizing the classification capacity. Adopting a subspace data model, we prove the equivalence of classification capacity maximization to the problem of packing on the Grassmann manifold. The results show   that  the classification capacity grows \emph{exponentially} with the instantaneous communication rate, and \emph{super-exponentially} with the dimensions of each data cluster. This also holds for 
ergodic and outage capacities with fading if the instantaneous rate is replaced with an average rate and a fixed rate, respectively. In the \emph{second} case, a device has a unique preference of class subset for every communication rate, which is modeled as an instance of uniformly sampling the library. Without class selection, the classification capacity and its ergodic and outage counterparts are proved to scale \emph{linearly} with their corresponding communication rates  instead of the exponential growth in the last case. 
\end{abstract}
%============================================================
\begin{IEEEkeywords}
    Fading channels, classification algorithms, edge computing, adaptive coding.
\end{IEEEkeywords}    

\section{Introduction}

There is an emerging trend of deploying various \emph{Artificial Intelligent} (AI) algorithms at the edge, away from the central cloud, to provide a context-aware and low-latency platform for supporting a wide range of applications such as Internet search (e.g., Google Lens), digital payment (e.g., Alipay's Smile to Pay), and inter-connected vehicles in 5G \cite{Niyato2020CommSurvey}. The ubiquitous wireless connectivity  results in a new paradigm merging communication and inference, called \emph{edge inference}, referring to the broad set of techniques for deploying trained  AI models  at edge servers to remotely execute inference  tasks posed by mobile users, such as object recognition or speech interpretation. 

A large class of edge inference services can be reduced to the model in which a mobile user wirelessly uploads a multimedia data sample (photo, video or speech clip) over a wireless link, the edge server recognizes an object embedded in the sample and feeds back the object label. We term this operation \emph{remote classification} and it is the main theme of this work. A large-scale remote classifier in the edge/central cloud is capable of rendering many object classes, around 700 for Google Cloud and 200 for Tencent Cloud. Maximizing the correctness of classification requires a user to upload high-dimensional features (or large-size raw data). However, this is challenged, on the one hand, by the variability of the wireless link due to fading and interference and, on the other hand, by the stringent latency requirements in real time and/or high-mobility applications. To address this issue, an existing remote classification service achieves the required versatility by deploying  a system of classifiers with diversified capacities, which are switched according to the application requirements, input data quality, or dimensionality of input feature vectors. Motivated by this, we study the capacity of remote classification as a function of the communication rate offered by the wireless link. 

\subsection{Classification, Channel Coding, and Source Coding}

%--------Revision---------------July 12

The essence of the remote classification problem can be better understood by relating it to two classic problems in information theory: source coding and channel coding. In source coding (or compression), a transmitter represents source information using codewords that can be sent over a limited-rate channel and enable the receiver to accurately reconstruct the information~\cite{Berger1971Book}. In channel coding, the transmitter selects a set of codewords to which the messages are mapped 
and the receiver should be capable of differentiating the codewords even in the presence of channel noise, thereby decoding transmitted messages~\cite{Cover1991Book}. Source coding can be seen as a process of remote estimation, while channel coding as a process of remote classification in which codewords are subject to design. Remote classification can be related to coding  based on the following interpretation. We can view class as ``codewords'' chosen by the nature  and the objects as noisy instances of the classes~\cite{Bishop2006Book}. Then the transmitter sends a description (features) of a noisy instance over a limited-rate channel, such that the receiver is able to ``decode" (infer) the ``codeword"  (the covert class or the label of the instance). Therefore, the name of ``remote classification'' as used in this paper, refers to a particular remote classification process in which the classes (``codewords'') are not subject to design.

Despite the similarity, there exist several fundamental differences between remote classification and source/channel coding. First, the ``codewords'' (classes) in the former are chosen by the nature and not subject to design as in the latter. As a result, a typical {multimedia} classifier cannot be derived theoretically. Instead, it is usually computed using a supervised machine learning technique, which includes choosing a suitable model [e.g., \emph{support vector machine} (SVM) or \emph{convolutional neural network} (CNN)] and training the model using a large labeled dataset~\cite{Bishop2006Book}. Second, in source/channel coding, it is the transmitter that has the ground-truth information while the receiver gets an imperfect version of this information. In contrast, in remote classification, the receiver is the one responsible for inferring the ground-truth information (in the form of a label); the transmitter does not have the information and acquires it via a feedback channel. Finally, the general problem of multimedia classification can have different mathematical characteristics from those of coding such as data spaces (e.g., a feature space versus a Galois field).

Despite the differences, relating remote classification to source/channel coding creates the possibility of exploiting analytical tools from the rich literature on the latter to study the former. An early work in this direction is~\cite{Nokleby2015TIT} where the rate of a stand-alone classifier is found to be mathematically equivalent to the capacity of a MIMO channel with space-time modulation. In this work, we adopt a similar approach to investigate the performance of a different system of remote classification featuring a pair of separated classifier and data source that are connected using a wireless channel.

\subsection{Edge Computing and Inference}

Remote classification and edge inference at large are services supported on the edge computing architecture~\cite{Niyato2020CommSurvey}. The current work shares the same spirit as that on computation offloading, a main theme of edge computing research, where mobile devices use unreliable wireless links to offload computation to edge servers. In the  current work, this computation is exemplified by classification. Edge computing augments the capabilities of mobile devices while preserving their energy efficiency~\cite{You2017CommSurvey}. To reduce the devices' energy consumption, a key approach for energy efficient computation offloading is to jointly optimize radio resource allocation to multiple users and their offloaded computation loads  \cite{You2017TWC,Quek2017TCOM,Niyato2012TWC}. Stochastic optimization tools, such as Lyapunov optimization, are applied to adapting offloading decisions \cite{Niyato2012TWC} and servers' CPU frequencies \cite{Zhang2017TWC} to the dynamics in computation tasks and channels in order to reduce both  latency and power consumption. More complex techniques for accelerating offloaded computation include replicated computation at multiple servers \cite{Tao2020TWC}, adding the new dimension of caching to the joint communication-and-computation control \cite{Niyato2020TMC}, and scheduling of computation tasks \cite{Zhang2016ISIT}. Without considering a specific application, the prior work is based on generic computation models, such that the load is measured by the number of bits and the speed by the number of bits computed per second.

Attempts on materializing the vision of edge AI has led to the emergence of edge learning (see, e.g., \cite{Bennis2020Arxiv,Niyato2019Arxiv}, for an overview) and edge inference, which is the theme of this work. Research in edge inference has resulted in several interesting design approaches. Building on the mentioned idea of replicated computation in \cite{Tao2020TWC},  it is proposed in \cite{Shi2020JCIN} that the association between servers (base stations) and devices can be optimized together with beamforming to reduce the total energy consumption of the devices.  Several research groups have developed techniques to implement device-edge cooperative inference, where a learning task is partitioned and executed partially on device and partially offloaded to the servers \cite{Zhou2019Arxiv,Zhang2020Arxiv,Chen2018MECOMM}. To address the issue of limited computation capacity of a device,  a CNN model can be pruned before  partitioning,  and the idea can be implemented using the techniques in  \cite{Zhou2019Arxiv}. There also exist techniques for channel adaptive model partitioning and  coding \cite{Zhang2020Arxiv}. Furthermore, the model partitioning can be adjusted according to the allocated bandwidth and the requirements on latency and inference accuracy, which is  the approach advocated in \cite{Chen2018MECOMM}. In addition, data compression for communication-efficient edge inference has also been investigated. For example, a relevant architecture is proposed in \cite{Katti2018HotNets} where a \emph{deep neural network} (DNN) encoder is deployed at a transmitter to compress raw data and the compressed data is decoded by the server using a DNN encoder before feeding the output into another DNN model for inference. In view of  prior works, they are  focused on technique design and rely  on experiments for performance evaluation. There exist few results on the fundamental limits of edge inference systems under the constraint of  wireless channels connecting severs and devices, which motivates the current work. 

\subsection{Contributions and Organization}

The objective  of this work is to make the first attempt on quantifying the performance of a remote classification system under a \emph{communication channel constraint}, referring to the finite and time-varying rate of a wireless communication channel. To this end, we consider a system in which a mobile device sends a feature vector over a wireless channel to an edge server, which performs 
classification and sends the result to the mobile device. The server supports classification of an arbitrary subset of a class library based on a mainstream architecture of large-scale classification ({see e.g., \cite{Yu2017SIGIR}}). On the one hand, even if the communication rate is sufficiently large for transmitting all features of each sample, classification errors can still occur as an inherent effect of \emph{data noise}, which is caused by the natural factors in sensing (e.g., pose, perspective, lighting, and background). On the other hand, as the rate varies and so does the received number of features per sample, if the classification error probability should be constrained, the maximum number of object classes that are chosen to be discerned by the remote classifier has to be adapted to the rate in a similar way as the maximum constellation order of adaptive modulation. This gives rise to a performance metric called \emph{$\epsilon$-classification capacity} defined as the maximum number of classes that can be discriminated under the channel constraint and for a given target classification error probability\footnote{In the following text, it will be always implicitly  assumed that there is a target classification probability that needs to be met.}. Furthermore, two derivative metrics, called \emph{ergodic} and \emph{outage classification capacities}, are defined to account for the effect of fading, which correspond to adaptive and fixed coding rates, respectively. Using these metrics, the system performance is analyzed for two cases.
\begin{itemize}
\item \textbf{Class-selection case}:  The user has broad interests covering the whole library (e.g., augmented reality). Given a communication rate, the user selects a subset of classes for classification with the aim of maximizing the classification capacity while meeting the target error probability.

\item \textbf{Random-class case}: The user makes a unique choice of class subset for every given communication rate. The subset is modeled as an instance generated by uniform sampling of the library. 
\end{itemize}

The scope of contributions made by this work is described as follows. For tractability, we follow the relevant work in~\cite{Nokleby2015TIT} to adopt  a statistical  data model from the area of linear regression and a matching subspace \emph{maximum-likelihood} (ML)  classifier. Though an alternative classifier model, namely {a neural network}, is also considered in experiments, tractable analysis of its classification capacity remains an open problem. For the current analysis, it is sufficient to use a generic model of wireless channel characterized by a time varying rate. Specific physical-layer techniques such as MIMO, OFDM and NOMA for supporting the rate are not explicitly considered.

The main contributions of the work are summarized as follows. 
\begin{itemize}
\item \textbf{Classification capacity with class selection:} Consider the mentioned class-selection case. For a large library, the problem of maximizing the $\epsilon$-classification capacity by class selection is shown to be equivalent to the mathematical problem of packing on a Grassmann manifold. The relation allows the application of packing results together with error probability analysis of space-time modulation to derive bounds on the maximum capacity. The results reveal  the \emph{exponential growth} of the capacity  with the communication rate and \emph{super-exponential growth} with the dimensions of each data cluster. Based on the results and considering Rayleigh fading, ergodic and outage classification capacities are proved to  follow the same scaling laws as stated above if the communication rate is replaced by its ergodic counterpart or the maximum rate under an outage constraint.

\item \textbf{Classification capacity with random classes:} Consider the other case of  random classes. The  expected classification  error probability is related to the isotropic distribution on a Grassmann manifold. Applying relevant results allow the derivation of a lower bound on the classification capacity, which increases \emph{linearly} with the communication rate. Lower bounds on ergodic and outage classification capacities with Rayleigh fading are also derived and shown  to follow the same scaling law.

\item \textbf{Extension to fast fading:} The preceding results based slow fading are extended to the case with fast fading, resulting in a \emph{random} number of features used for remote classification  of  each  data sample. It is found that fast fading does not change the classification-capacity scaling laws except for adding to  the communication rates the multiplicative factor equal to some packet-success probability. 

\item \textbf{Experiment results:} Experiments based on both the  statistical data model and  a real dataset (MNIST) are conducted to demonstrate the effects of wireless channel on the capacities of remote classification and the classification capacity gains of the class selection case \emph{with respect to} (w.r.t.) the random-class case. 

\end{itemize}
\emph{Organization}: The remainder of the paper is organized as follows. The models and performance metrics are introduced in Section~\ref{sec: system_model}. Section~\ref{sec:class_selection} presents the analysis on classification capacities with class selection while that for the random-class case is investigated in Section~\ref{sec:Pe_random_data}. The derived results are further extended to fast fading channels in Section~\ref{Section:Fast_fading}. Section~\ref{sec:experiments} provides the experimental results, followed by concluding remarks in Section~\ref{Sec:Concluding Remarks}.

\section{Models and Metrics}
\label{sec: system_model}
Consider the remote classification system in Fig.~\ref{fig: system_model}, where an edge device transmits feature vectors, extracted from data samples, to an edge server for classification using a trained model and receives from the server the inferred labels. The specific models and performance metrics are described as follows.

\begin{figure}[t]
\centering
\includegraphics[scale=0.25]{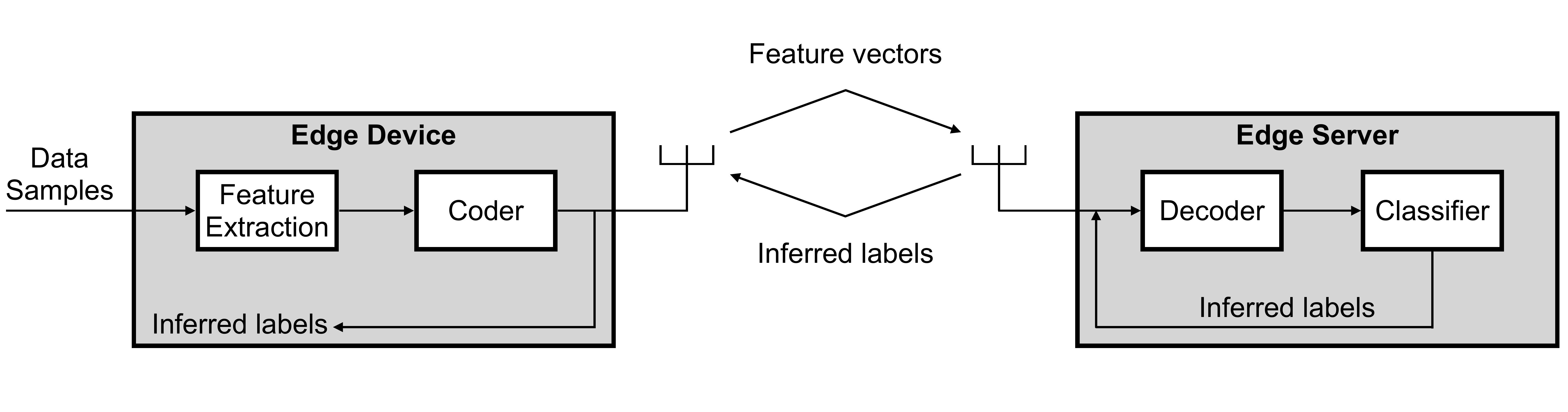}
\caption{Remote classification system.}
\label{fig: system_model}
\end{figure}

\subsection{Classification Model}

As in~\cite{Nokleby2015TIT}, we consider the classic statistical problem of classifying linear subspaces. The statistical data model and ML classifier are described as follows.

\subsubsection{Statistical data model}
Consider a clustered dataset comprising $L$ separable classes, where the $i$-th class centroid is represented by a unitary matrix $\bU_i \in \mathbb R^{N\times K}$ with  $N\geq K$ and  $\bU_i^{T}\bU_i=\bI_K$. An arbitrary data sample, denoted as $\tilde{\bx}$, that  belongs to the $i$-th class is modeled as  \cite{Nokleby2015TIT}
\begin{equation}
\tilde{\bx} = \Phi\bU_i\bs + \tilde{\bw},
\end{equation}
where the unitary matrix $\Phi$ represents the discriminant subspace embedded in the raw-data space, $\bs$ results  from the projection of the data sample into the class subspace $\bU_i$, and $\tilde{\bw}$ accounts for both the error in fitting the dataset distribution to the  subspace model as well as the mentioned data noise. Note that $\tilde{\bw}$ is the inherent cause of classification errors even in the absence of channel constraint. The random vector $\bs  \in \mathbb R^{K}$ is assumed to consist of  \emph{independent and identically distributed} (i.i.d.) $\mathcal N(0,\sigma^2_{\bs})$ elements.  To compress the sample, a feature vector, denoted as  ${\bx}$, is extracted from $\tilde{\bx}$ by projecting it onto the  discriminant subspace: 
\begin{equation}
\label{signal_model}
\bx = \Phi^{ T} \tilde{\bx} = \bU_i\bs + \bw, 
\end{equation}
where ${\bw}=\Phi^T\tilde\bw$ comprises i.i.d. $\mathcal N(0,\sigma^2_{\bw})$ elements and referred hereafter simply as data noise. The subspace $\Phi$ is assumed to be known to the server for calibrating the needed classifier; when $\Phi$ is determined by the sever, the operation is known in the literature as \emph{feature selection}. Based on  \eqref{signal_model}, the data model can be parameterized by the  subspace set $\mathcal U_{L} = \{\bU_{\ell}\}$.

\begin{definition}(Data SNR).
\emph{
The \emph{signal-to-noise ratio} (SNR) of the dataset is defined as the ratio between the variance of each cluster and that of data noise:
\begin{equation}
{\sf Data\ SNR}= \frac{\sigma_{\bs}^2}{\sigma^2_{\bw}} = \sigma_{\bs}^2,
\end{equation}
where we set  $\sigma^2_{\bw} = 1$ without loss of generality.}
\end{definition}

\subsubsection{Maximum-likelihood remote classifier}
Conditioned on $\mathbf{U}_i$, the \emph{probability density function} (PDF) of $\mathbf{x}$ is given as
\begin{align}\label{eqn:detection_ml_v}
P(\mathbf{x}\ \vert \ \mathbf{U}_i )
=\frac{\exp{\left(-\frac{1}{2}\bx^{T}{\left(\sigma_{\bs}^2\bU_i\bU_i^{T}+\bI_N\right)^{-1}\bx}\right)}}{\left(2\pi\right)^{N/2}\det^{1/2}\left(\sigma_{\bs}^2\bU_i\bU_i^{T}+\bI_N\right)} =\frac{\exp{\left(-\frac{1}{2}\bx^{T}\bx+\frac{\sigma_{\bs}^2}{2\left(1+\sigma_{\bs}^2\right)}\bx^{T}\bU_i\bU_i^{T}\bx\right)}}{\left(2\pi\right)^{N/2}\left(1+\sigma_{\bs}^2\right)^{K/2}}.
\end{align}
Given  the knowledge of $\{\bU_i\}_{i=1}^{L}$, the classifier estimates  the class of a reliably received feature vector $\bx$, say $\bU_i$, (or equivalently the label $i$) by maximizing the above PDF: 
\begin{align}\label{eqn:pe_ml_v}
\widehat{i} \triangleq \arg\max_{i\in\{1,2,...,L\}}  p({\mathbf x}|\mathbf{U}_i)= \arg\min_{i\in\{1, 2, ..., L\}} \bx^{T}\bU_i\bU_i^{T}\bx,
\end{align}
which is  the well-known ML classifier. 

\begin{remark}(Geometric Interpretation).
 \emph{
The operation  $\bU_i\bU_i^{T}\bx$ projects the feature vector  $\bx$ onto the subspace, $\text{span}\{\bU_i\}$. It gives the geometric interpretation that  the ML classifier essentially aims at identifying  the subspace forming  the smallest angle  with (or equivalently having the smallest subspace distance to)  the  feature vector $\bx$.
}
\end{remark}

\subsection{Communication Model}
Time is divided into slots, each of which  has the duration of $T$ seconds.  Each feature is quantized into  a sufficiently large number of bits, denoted as $Q$, such that distortion is negligible.  The channel code is designed such that each  quantized feature vector is encoded into a single codeword transmitted using one slot. The  variation of the channel with bandwidth $B$ is assumed to be slow w.r.t. the slot duration such that the channel remains constant within each slot but varies over slots. The extension to the scenario of fast channel variation is presented in Section~\ref{Section:Fast_fading}. Let $R$ denote the communication rate (bit/s) of the channel. Both the cases of channel adaptive and fixed coding rates are considered as discussed in the sequel.  As an example, given the transmit SNR (denoted as $\rho$) and without \emph{channel state information at the transmitter} (CSIT), the rate for a \emph{single-input-single-output} (SISO) channel is $R = B\log_2(1+ \rho|h|^2)$, where $h$ denotes the channel gain. As another example,  the rate for a \emph{multiple-input-multiple-output} (MIMO) channel is $R =  B\log_2\det \left| \mathbf{I}+ \frac{\rho}{N_t}\mathbf{H}\mathbf{H}^H\right|$ where $\mathbf{H}$ denotes the channel matrix and $N_t$ the number of transmit antennas.  The finite communication rate introduces  a constraint on the feature dimension $N =  \beta R$  where $\beta = \frac{T}{Q}$.

\subsection{Performance Metrics}\label{Subsection: model}
To facilitate defining the performance metrics, the notion of (object) \emph{class library} is first formalized. In practical remote-classification, the server supports a large library of classes and can generate an active classifier for a user based on the chosen subset of classes (see e.g., \cite{Yu2017SIGIR}). The class library is represented by $\mathcal{F}=\{\bF_1,\bF_2,...,\bF_M\}$ where each element is a subspace matrix representing an available class. The $L$-class subset chosen by a user is specified by the subspace set $\mathcal{U}_{L}$ with $\mathcal{U}_{L} \subset \mathcal{F}$, which determines  the dataset distribution. 

\subsubsection{Classification error probability}
Labels inferred by the remote classifier can be erroneous as an inherent effect of data noise even though the channel is reliable and even if its rate is sufficiently large to transfer all features. A classification error is declared if the inferred label is different from the ground truth. Conditioned on the data distribution specified by $\mathcal{U}_{L}$ and  the communication rate $R$, the classification error probability, denoted as  $P_{\sf e}$, can be written as
\begin{equation}
\label{eqn: Pe_conditioned_on_R}
P_{\sf e}(R, \mathcal{U}_{L}) \triangleq \frac{1}{L}\sum_{\ell=1}^{L}{\Pr\left( \mathcal{L}(\bx)\neq \ell \ \vert \ y = \ell, \mathcal{U}_{L}, R \right)},
\end{equation}
where $\mathcal{L}$ denotes the classifier  function mapping the input feature vector to the inferred label and    $y$ is the ground-truth label. Note the above definition assumes that the prior probability that the object $\bx$ belongs to one of the $L$ classes is uniform, as in~\cite{Nokleby2015TIT} and $R$ determines the length of $\bx$ as described in the sequel. The future extension to the case with  non-uniform probabilities requires modifying the classifier model by adding prior-dependent weights to the likelihoods of different labels.

\subsubsection{$\epsilon$-classification capacity} Recall that the metric, denoted as $C$, is defined as the maximum number of classes that can be discriminated given an instantaneous communication rate, $R$, such that the classification error probability, $P_{\sf e}$, is no larger than a given threshold $\epsilon \in (0,1)$.  Conditioned on $R$ and $\mathcal U_{L}$, the error probability can be written as the function $P_{\sf{e}}(R,\mathcal U_{L})$. Using the notation, the \emph{$\epsilon$-classification capacity} for the class-selection case can be defined as:
\begin{equation}\label{Def: Classification_Capacity_sel}
C^{\sf sel}(R) = \sup_{\mathcal U_{L}\in \mathcal F,L}\l\{L \ |\ P_{\sf e}(R,\mathcal U_{L}) \leq \epsilon\r\}.
\end{equation}
The counterpart for the random-class case is defined as 
\begin{equation}\label{Def: Classification_Capacity_rnd}
C^{\sf rnd}(R) = \sup_{L}\l\{L \ |\ \mathbb E_{\mathcal U_L}\l[P_{\sf e}(R,\mathcal U_{L})\r] \leq \epsilon\r\},
\end{equation}
where the expectation is over the distribution of the classes, $\mathcal U_L$, given $L$. 

\subsubsection{Ergodic classification capacity}

Consider the case where the device has CSIT and adapts the number of features per sample as well as coding rate to the channel state. Then we can define the \emph{ergodic classification capacity} as:
\begin{align}\label{eqn: ergodic_capacity_definition}
\bar{C} = \left\{ 
 \begin{array}{rcl}
 \mathbb E_R\l[C^{\sf sel}(R)\r], &  & \text{class-selection case};\\
  \mathbb E_R\l[C^{\sf rnd}(R)\r], &  &  \text{random-class case},
\end{array}
\right. 
\end{align}
where the expectations are over the distribution of communication rate $R$,  $C^{\sf sel}(R)$ is defined in~\eqref{Def: Classification_Capacity_sel} and $C^{\sf rnd}(R)$ in~\eqref{Def: Classification_Capacity_rnd}.

\subsubsection{Outage classification capacity}
A different communication model is adopted  where either the CSIT  is unavailable  or some form of channel inversion is used such that the channel cannot be inverted when its gain is below a given threshold. As a result,  the device fixes the number of features per sample and coding rate, resulting in a required communication rate, $r$, for successful decoding of a received feature vector. Then a channel outage event is one that the channel capacity falls below a given threshold  $r$, yielding  the  outage probability defined as 
\begin{equation}\label{Def: Channel_Outage}
P_{\sf{out}}(r) \triangleq \Pr(R\leq r).
\end{equation}
Under an outage constraint, $P_{\sf{out}} \leq \delta$,  and a fixed transmit SNR, there exists  a maximum rate of $r$. Then the outage classification capacity is defined as 
\begin{align}\label{Def: Outage_Classification_Capacity}
{C}_{\sf out} = \left\{ 
 \begin{array}{rcl}
 \max\limits_{r} \l\{ C^{\sf sel}(r) \ | \ P_{\sf out}(r) \leq \delta \r\}, &  & \text{class-selection case}是啊\\
  \max\limits_{r} \l\{ C^{\sf rnd}(r) \ | \ P_{\sf out}(r) \leq \delta \r\}, &  &  \text{random-class case},
\end{array}
\right. 
\end{align}
where $C^{\sf sel}(r)$ and $C^{\sf rnd}(r)$ are defined in~\eqref{Def: Classification_Capacity_sel} and~\eqref{Def: Classification_Capacity_rnd}, respectively. 

\begin{remark}
(Effective Classification Error Probability).
\emph{For remote classification, when the channel is in outage, the server receives zero features for a transmitted sample  but may make a random guess on the sample's label with the error probability of $\frac{L-1}{L}$. If this is the case, the effective classification error probability is  slightly larger than  its constraint $\epsilon$ and should be given as $(1-\delta)\epsilon + \delta \frac{L-1}{L}$.} 
\end{remark}

\section{Classification Capacity with Class Selection}
\label{sec:class_selection}

In this section, the $\epsilon$-classification capacity and its ergodic and outage counterparts for the class-selection case are analyzed. 

\subsection{Classification Error Probability}

To facilitate the derivation of classification capacities under a constraint on the classification error probability, we first analyze the probability as follows.

\subsubsection{Pairwise classification error probability}
Consider the classification of two specific classes, namely $\bU_i$ and $\bU_j$. The error probability of binary classification based on a similar data distribution model as the current one was studied in~\cite{Marzetta2000TIT} in the context of space-time demodulation. Let ``$i\rightarrow j$'' denote the event that a sample of class $i$ is assigned label $j$ by the classifier. Then the \emph{pairwise classification error probability} (PCEP) can be defined as $P(i\to j) = \Pr\l(\mathcal L(\bx) = j\ |y=i,  \mathcal{U}_L, R \r)$. A main result from~\cite{Marzetta2000TIT} is given below. 
\begin{lemma}\label{Lemma: PEP}
(Exact Pairwise Classification Error Probability~\cite{Marzetta2000TIT}).
\emph{
The probability is given as
\begin{equation}\label{eqn:exact_pep_v}
P(i\to j) = \frac{1}{4\pi}\int_{-\infty}^{\infty}dw \frac{1}{w^2+1/4}\cdot \prod_{\overset{k=1}{\cos\theta^{(i,j)}_k<1}}^{K}\frac{1+\sigma_{\bs}^2}{\sigma_{\bs}^4\left(1-\cos^2\theta^{(i,j)}_k\right)\left(\omega^2+a^2_k\right)},
\end{equation}
where $\theta^{(i,j)}_k$ denotes the $k$-th principal angle between $\bU_i$ and $\bU_j$, and $a_k=\sqrt{\frac{1}{4}+\frac{\sigma_{\bs}^2+1}{\sigma_{\bs}^4\left(1-\cos^2\theta^{(i,j)}_k\right)^2}}$.}
\end{lemma}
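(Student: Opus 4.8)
The plan is to write the PCEP as the tail probability of an indefinite Gaussian quadratic form and then evaluate it by characteristic-function inversion, with the principal angles entering through a block decomposition of the two subspaces. First I would observe that, conditioned on $y=i$, the feature vector in \eqref{signal_model} is zero-mean Gaussian, $\bx\sim\mathcal N(\mathbf 0,\mathbf{\Sigma}_i)$ with $\mathbf{\Sigma}_i=\sigma_{\bs}^2\bU_i\bU_i^T+\bI_N$. Comparing the two class statistics appearing in the ML rule \eqref{eqn:pe_ml_v}, the event $i\to j$ is exactly the event that the class-$j$ statistic beats the class-$i$ statistic, so that
\begin{equation}
P(i\to j)=\Pr\l(\bx^T\bA\,\bx>0 \ \big|\ y=i\r),\qquad \bA\triangleq \bU_j\bU_j^T-\bU_i\bU_i^T,
\end{equation}
where $\bA$ is symmetric and indefinite and vanishes on the orthogonal complement of $\mathrm{span}\{\bU_i\}+\mathrm{span}\{\bU_j\}$; hence only the projection of $\bx$ onto this combined subspace is relevant. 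Since the final expression will depend only on $\cos^2\theta^{(i,j)}_k$, the orientation of $\bA$ is immaterial.

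Second, I would diagonalize using the principal angles via the CS decomposition of $(\bU_i,\bU_j)$: there exist orthonormal bases in which $\bU_i^T\bU_j$ is diagonal with entries $\cos\theta^{(i,j)}_k$ and in which both $\bA$ and $\mathbf{\Sigma}_i$ reduce to $2\times2$ (and $1\times1$) blocks, one per principal angle. Directions shared by the two subspaces have $\cos\theta^{(i,j)}_k=1$; there the two projectors coincide, the corresponding block of $\bA$ is zero, and the contribution drops out — this is the origin of the restriction $\cos\theta^{(i,j)}_k<1$ in the product in \eqref{eqn:exact_pep_v}. Likewise the directions orthogonal to both subspaces carry only noise and do not enter. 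On each remaining block the statistic becomes an indefinite quadratic form in a pair of independent Gaussian coordinates, and I would compute the two (opposite-sign) eigenvalues of the whitened block $\mathbf{\Sigma}_i^{1/2}\bA\,\mathbf{\Sigma}_i^{1/2}$ as functions of $\theta^{(i,j)}_k$ and $\sigma_{\bs}^2$; after the normalization used in the inversion, their magnitudes should produce the constant $a_k$.

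Finally, I would assemble the tail probability by inversion. The characteristic function of $D=\bx^T\bA\bx$ factorizes over the independent blocks as $\Phi_D(\omega)=\prod_k[\det(\bI-2j\omega\,\mathbf{\Sigma}_i^{(k)}\bA^{(k)})]^{-1/2}$, and inserting the per-block eigenvalues turns each factor into a rational function of $\omega$, yielding the $(\omega^2+a_k^2)$ denominators and the prefactors $\tfrac{1+\sigma_{\bs}^2}{\sigma_{\bs}^4(1-\cos^2\theta^{(i,j)}_k)}$. The tail $\Pr(D>0)$ is then recovered from $\Phi_D$ through the standard contour-inversion representation of the decision indicator, whose kernel contributes the fixed factor $1/(w^2+\tfrac14)$; collecting terms yields \eqref{eqn:exact_pep_v}. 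I expect the main obstacle to be twofold: the per-block eigenvalue bookkeeping together with the delicate choice of contour and treatment of the $\omega=0$ behaviour, so that the normalization and the exact constant $a_k=\sqrt{\tfrac14+\tfrac{\sigma_{\bs}^2+1}{\sigma_{\bs}^4(1-\cos^2\theta^{(i,j)}_k)^2}}$ emerge; and, since \eqref{eqn:exact_pep_v} is quoted from the complex space--time demodulation setting of \cite{Marzetta2000TIT}, carefully matching that model to the real-valued data model of \eqref{signal_model}. Given this correspondence, the economical route is to verify the change of variables mapping \eqref{signal_model} onto the model of \cite{Marzetta2000TIT} and then import the formula directly.
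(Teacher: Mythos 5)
The paper does not actually prove this lemma: it is quoted directly from \cite{Marzetta2000TIT}, so your closing recommendation---verify the change of variables matching \eqref{signal_model} to that reference's model and import the formula---is precisely what the paper does, and your sketch (tail probability of the indefinite Gaussian quadratic form $\bx^T(\bU_j\bU_j^T-\bU_i\bU_i^T)\bx$, CS/principal-angle block reduction explaining the restriction $\cos\theta^{(i,j)}_k<1$, and characteristic-function contour inversion producing the $1/(w^2+\tfrac14)$ kernel and the $(\omega^2+a_k^2)$ factors) faithfully reconstructs the derivation carried out in that reference. The two caveats you flag are indeed the genuine subtleties: in the real-valued model each per-angle block contributes a characteristic-function factor with exponent $-\tfrac12$ (branch points rather than simple poles), so the rational closed form transfers from the complex space--time setting of \cite{Marzetta2000TIT} only after the model matching you describe; and your reading of the ML rule as maximizing $\bx^T\bU_i\bU_i^T\bx$ is the correct one, the $\arg\min$ in \eqref{eqn:pe_ml_v} being a sign typo in the paper.
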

Note that the effect of the number of features per sample, $N$, (or the proportional communication rate, $R$) is not reflected in the above result given a fixed distance between $\bU_{i}$ and $\bU_{j}$, measured by $\{\cos^2{\theta_{k}^{(i,j)}}\}$. The effect of $N$ (or $R$) lies in determining the dimensionality of the feature space and hence the number of classes that can be packed into the space as elaborated in Section~\ref{sec: inst_class_capacity_selection}.
To simplify analysis and gain insight, we further derive an upper and a lower bounds on the probability in the following lemma. 

\begin{lemma}\label{Lemma:PCEPBound}
(PCEP Bounds).
\emph{The PCEP can be bounded as}
\emph{
\begin{equation}\label{Upper: PEP}
\frac{1}{\pi} \arctan(\sqrt{3})\l(\frac{1}{1+\frac{4}{K}g(\sigma^2_{\bs})d^2_{i,j}}\r)^K\leq P(i\to j) \leq \frac{1}{2}\left(\frac{1}{1+g(\sigma^2_{\bs})}\right)^{\lfloor{d^2_{i,j}}\rfloor},
\end{equation}
where $g(\sigma^2_{\bs})=\frac{1}{4\left(\sigma_{\bs}^{-4}+\sigma_{\bs}^{-2}\right)}$ is a monotonically increasing function of the data SNR $\sigma^2_{\bs}$ and $d_{i,j} = \sqrt{K-\text{tr}\{\bU_i\bU^T_i\bU_j\bU^T_j\}}$ denotes the (chordal) subspace distance between the two classes $\bU_i$ and $\bU_j$.}
\end{lemma}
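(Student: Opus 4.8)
The plan is to treat the two bounds separately, both starting from the exact integral in Lemma~\ref{Lemma: PEP} after introducing convenient shorthands. Writing $s_k^2 := 1-\cos^2\theta^{(i,j)}_k = \sin^2\theta^{(i,j)}_k \in [0,1]$ and recalling $g(\sigma_{\bs}^2) = \frac{\sigma_{\bs}^4}{4(1+\sigma_{\bs}^2)}$, the integrand factorizes as $\frac{1}{\omega^2+1/4}\prod_k\frac{b_k}{\omega^2+a_k^2}$ with $b_k = \frac{1}{4g s_k^2}$ and $a_k^2 = \frac14 + \frac{1}{4g s_k^4}$. I would also record at the outset the identity that turns the trace into the relevant sum of angles, namely $d_{i,j}^2 = K - \tr\{\bU_i\bU_i^T\bU_j\bU_j^T\} = \sum_{k=1}^K s_k^2$, together with the pointwise constraint $s_k^2 \le 1$; these two facts are what ultimately let the per-angle estimates be repackaged in terms of $d_{i,j}^2$.

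For the upper bound I would maximize the Lorentzian factors at $\omega = 0$: since each $\frac{1}{\omega^2+a_k^2} \le \frac{1}{a_k^2}$, pulling the product out and using $\int_{-\infty}^{\infty}\frac{d\omega}{\omega^2+1/4} = 2\pi$ gives $P(i\to j) \le \frac12\prod_k\frac{b_k}{a_k^2}$. A short computation shows $\frac{b_k}{a_k^2} = \frac{s_k^2}{1+g s_k^4}$, and the inequality $\frac{s_k^2}{1+g s_k^4} \le \frac{1}{1+g s_k^2}$ is equivalent to $s_k^2 \le 1$, hence always holds. It then remains to prove $\prod_{k=1}^K(1+g s_k^2) \ge (1+g)^{\lfloor d_{i,j}^2\rfloor}$ subject to $s_k^2 \in [0,1]$ and $\sum_k s_k^2 = d_{i,j}^2$. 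Because $t \mapsto \log(1+gt)$ is concave, the sum $\sum_k\log(1+g s_k^2)$ is minimized at a vertex of the polytope, i.e.\ with all $s_k^2 \in \{0,1\}$ except at most one fractional coordinate; the minimizing vertex carries exactly $\lfloor d_{i,j}^2\rfloor$ unit coordinates, yielding $\sum_k\log(1+g s_k^2) \ge \lfloor d_{i,j}^2\rfloor\log(1+g)$ and thus the stated upper bound. This branch I expect to be routine.

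For the lower bound the tell-tale constant $\frac1\pi\arctan(\sqrt3)$ points to truncating the integral to a finite symmetric interval. Restricting to $|\omega| \le W$ and lower-bounding every Lorentzian factor by its value at the endpoint $\omega^2 = W^2$, one obtains $P(i\to j) \ge \frac1\pi\arctan(2W)\prod_k\frac{b_k}{W^2+a_k^2}$, since $\int_{-W}^{W}\frac{d\omega}{\omega^2+1/4} = 4\arctan(2W)$. Choosing $W = \tfrac{\sqrt3}{2}$ makes $\arctan(2W) = \arctan(\sqrt3)$ and reduces the endpoint factor to $\frac{b_k}{3/4+a_k^2} = \frac{s_k^2}{1+4g s_k^4}$, so the prefactor already matches the claim exactly. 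The closing move I would attempt is to pass from $\prod_k\frac{1}{1+4g s_k^2}$ to the balanced form via the arithmetic--geometric mean inequality: $\prod_k(1+4g s_k^2) \le \big(\tfrac1K\sum_k(1+4g s_k^2)\big)^K = \big(1+\tfrac{4}{K}g\,d_{i,j}^2\big)^K$, which gives precisely $\prod_k\frac{1}{1+4g s_k^2} \ge \big(1+\tfrac4K g\,d_{i,j}^2\big)^{-K}$.

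The main obstacle, and the step I would scrutinize most carefully, is the junction between these two halves of the lower-bound argument: the truncation delivers the product $\prod_k\frac{s_k^2}{1+4g s_k^4}$, whereas the AM--GM step consumes $\prod_k\frac{1}{1+4g s_k^2}$, and the former is in general the smaller of the two (already for equal angles $s_k^2 = d_{i,j}^2/K < 1$ one has $\frac{s_k^2}{1+4g s_k^4} < \frac{1}{1+4g s_k^2}$). Closing this gap is the crux: I would try to sharpen the endpoint estimate by bounding the Lorentzian product \emph{jointly} rather than factor by factor (for instance via $\prod_k(\omega^2+a_k^2) \le \big(\omega^2+\tfrac1K\sum_k a_k^2\big)^K$ before integrating), and by exploiting $s_k^2 \le 1$ to replace $s_k^4$ by $s_k^2$ inside the denominators; if no term-by-term reconciliation succeeds, this would signal that the interval width or the precise exponent in the statement must be re-examined. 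In short, I expect the upper bound to follow cleanly from maximization at $\omega = 0$ plus a concavity argument, while the lower bound's passage to the $d_{i,j}^2$-form is the genuinely delicate part.
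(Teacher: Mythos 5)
Your upper-bound branch is correct and is essentially the paper's own argument: the paper likewise uses $\omega^2+a_k^2\ge a_k^2$, integrates $\frac{1}{4\pi}\int_{-\infty}^{\infty}\frac{d\omega}{\omega^2+1/4}=\frac12$, arrives at $\frac12\prod_k\bigl(1+g(\sigma^2_{\bs})\sin^2\theta_k^{(i,j)}\bigr)^{-1}$, and then concentrates the angle mass subject to $\sum_k\sin^2\theta_k^{(i,j)}=d_{i,j}^2$; your vertex argument via concavity of $t\mapsto\log(1+gt)$ is in fact more careful than the paper's (which asserts a negative second derivative that is actually positive, though its extremal conclusion is the right one).

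The obstacle you isolate in the lower bound is genuine, and your closing suspicion—that something in the statement must be re-examined—is the correct diagnosis, but the culprit is the printed Lemma~\ref{Lemma: PEP}, not Lemma~\ref{Lemma:PCEPBound}. The exponent on $(1-\cos^2\theta_k^{(i,j)})$ inside $a_k$ should be $1$, not $2$: the Hochwald--Marzetta pairwise error probability has per-angle factors equal to
\begin{equation*}
\frac{1+\sigma_{\bs}^2}{\sigma_{\bs}^4\left(\omega^2+\tfrac14\right)\sin^2\theta_k^{(i,j)}+1+\sigma_{\bs}^2}
=\frac{1}{1+4g(\sigma^2_{\bs})\left(\omega^2+\tfrac14\right)\sin^2\theta_k^{(i,j)}},
\end{equation*}
and this is precisely the form the paper's appendix silently switches to at the start of its lower-bound proof (it presents the switch as an equality ``by Lemma~\ref{Lemma: PEP},'' which is false for the printed $a_k$ but true once $(1-\cos^2\theta_k^{(i,j)})^2$ is replaced by $(1-\cos^2\theta_k^{(i,j)})$ in $a_k$). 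A sanity check confirms the misprint: your factor $\frac{s_k^2}{1+4g s_k^4}$ vanishes as $s_k\to0$, meaning two nearly identical subspaces would almost never be confused—unphysical, and inconsistent with the exclusion $\cos\theta_k<1$ under the product, whereas the corrected factor tends to $1$ there, so excluded angles are harmless. With the corrected factor your own plan closes verbatim: truncate to $\omega^2+\frac14\le1$, which yields the prefactor $\frac{1}{4\pi}\int_{-\sqrt3/2}^{\sqrt3/2}\frac{d\omega}{\omega^2+1/4}=\frac{\arctan(\sqrt3)}{\pi}$, bound $\omega^2+\frac14\le1$ inside each factor to obtain $\prod_k(1+4g s_k^2)^{-1}$, and apply your AM--GM step $\prod_k(1+4g s_k^2)\le\bigl(1+\frac{4g}{K}d_{i,j}^2\bigr)^K$—which is exactly the paper's ``all angles equal'' minimization, with the order of the two operations reversed (the paper equalizes angles first, then truncates; both orders are valid here). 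Note finally that your upper bound is immune to the misprint, since $\frac{t}{1+gt^2}\le\frac{1}{1+gt}$ for $t\le1$ lands you on the same intermediate bound that the corrected formula gives directly.
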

\begin{proof}
See Appendix~\ref{Proof: Upper_bound}.
\end{proof}

\begin{remark}
(Effects of Data SNR and Class Distance).
\emph{
On the one hand, increasing the  data SNR causes data clusters to shrink, improving their discernibility. For this reason,   it is observed that both bounds on the PCEP in the above lemma decrease as the data SNR grows. On the other hand, the subspace distance between two classes determines their differentiability.  Consequently, increasing the distance reduces the bounds on the  PCEP. The improvement  is known as  the \emph{discrimination gain} in the literature. 
}
\end{remark}

\subsubsection{Classification error probability of $L$ classes}

Consider the error events $\{i\rightarrow j \ | \ i \neq j  \}$ and the pairwise classification error probability analyzed in the preceding subsection. By the \emph{union bound} and invoking~\eqref{Upper: PEP}, the probability can be bounded in terms of the pairwise counterpart as
\begin{equation}
P_{\sf e} = {\frac{1}{L} \sum_{i=1}^L \Pr \left( \bigcup_{j \neq i} (i\rightarrow j) \right) \stackrel{(a)}{\leq} \frac{1}{L} \sum_{i=1}^L \sum_{j \neq i} P(i\rightarrow j)   \stackrel{(b)}{=} } \frac{2}{L }\sum_{i=1}^{L-1}\sum_{j=i+1}^{L}P(i\rightarrow j),
\end{equation}
where (a) is due to the union bound and (b) due to the symmetry $P(i\rightarrow j) =P(j\rightarrow i)$.
Define $d_{\sf min} =\min_{(i,j)}d_{i,j}$, the above bound can be further relaxed to give the upper bound in Lemma~\ref{Lemma: PeBounds} in the sequel.

Next, $P_{\sf e} $ can be lower bounded as follows. Define $\mathcal W_{i,j^{*}}$ as an event that the ground-truth label is $i$ while the inferred label is $j^{*}\neq i$ subject to $d_{i,j^{*}}=  \min\limits_{j\neq i} d_{i,j}  \triangleq d^{(i)}_{\min}$. Then it follows from~\eqref{eqn: Pe_conditioned_on_R} that one lower bound of the classification error probability can be calculated as
\begin{equation}
P_{\sf e} \geq \frac{1}{L}\sum_{i=1}^{L}{\Pr\left(\mathcal W_{i,j^{*}} \ | \ i, \ j^{*} = \arg \min\limits_{j\neq i} d_{i,j} \right)},
\end{equation}
yielding the lower bound in Lemma~\ref{Lemma: PeBounds}.

\begin{lemma}\label{Lemma: PeBounds}
(Classification Error Probability).
\emph{Given the class subspace set $\mathcal{U}_L$, the classification error probability  can be bounded as
\begin{equation}\nn
\frac{1}{3L} \sum_{i=1}^{L} \l(\frac{1}{1+\frac{4}{K}g(\sigma^2_{\bs})\left(d^{(i)}_{\min}\right)^2}\r)^K \leq P_{\sf e}  \leq \frac{L}{2}\left(\frac{1}{1+g(\sigma_{\bs}^2)}\right)^{\lfloor{d^2_{\sf min}}\rfloor}. 
\end{equation}
}
\end{lemma}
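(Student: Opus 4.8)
The plan is to obtain both inequalities directly from the PCEP bounds in Lemma~\ref{Lemma:PCEPBound}, feeding them into the two reductions (the union bound for the upper side, the nearest-neighbor restriction for the lower side) that the text has already assembled.

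For the upper bound I would begin from the union-bound expression $P_{\sf e} \leq \frac{2}{L}\sum_{i=1}^{L-1}\sum_{j=i+1}^{L} P(i\to j)$ and insert the PCEP upper bound $P(i\to j)\leq \frac{1}{2}\bigl(\frac{1}{1+g(\sigma^2_{\bs})}\bigr)^{\lfloor d_{i,j}^2\rfloor}$. Since $0<\frac{1}{1+g(\sigma^2_{\bs})}<1$, the right-hand side is nonincreasing in the exponent, and because $d_{i,j}\ge d_{\sf min}$ for every pair we have $\lfloor d_{i,j}^2\rfloor \ge \lfloor d_{\sf min}^2\rfloor$; hence each summand is at most $\frac{1}{2}\bigl(\frac{1}{1+g(\sigma^2_{\bs})}\bigr)^{\lfloor d_{\sf min}^2\rfloor}$. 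The double sum runs over the $\binom{L}{2}=\frac{L(L-1)}{2}$ unordered pairs, so $P_{\sf e}\leq \frac{2}{L}\cdot\frac{L(L-1)}{2}\cdot\frac{1}{2}\bigl(\frac{1}{1+g(\sigma^2_{\bs})}\bigr)^{\lfloor d_{\sf min}^2\rfloor}=\frac{L-1}{2}\bigl(\frac{1}{1+g(\sigma^2_{\bs})}\bigr)^{\lfloor d_{\sf min}^2\rfloor}$, and relaxing $L-1\le L$ gives the claimed upper bound.

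For the lower bound, the step that needs the most care is justifying that restricting attention to the single nearest competing class already yields a valid lower bound. Writing $j^*=\arg\min_{j\ne i} d_{i,j}$, the event that $j^*$ attains a larger subspace projection than the true class $i$ forces $i$ not to be the maximizer of $\bx^{T}\bU\bU^{T}\bx$, so that event is contained in the multiclass error event $\{\mathcal L(\bx)\ne i\}$. Monotonicity of probability then gives $\Pr(\mathcal L(\bx)\ne i \mid y=i)\ge P(i\to j^*)$, where $P(i\to j^*)$ is precisely the binary PCEP of Lemma~\ref{Lemma: PEP}; this is the inequality recorded in the text as $P_{\sf e}\ge \frac{1}{L}\sum_i \Pr(\mathcal W_{i,j^*}\mid\cdots)$. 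I would stress that the inclusion points the right way only because $\mathcal W_{i,j^*}$ is the pairwise event ``$j^*$ beats $i$'' rather than ``$\mathcal L(\bx)=j^*$'': the latter, being more stringent in the multiclass setting, would produce an inequality in the wrong direction.

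Finally I would substitute the PCEP lower bound of Lemma~\ref{Lemma:PCEPBound} with $d_{i,j^*}=d^{(i)}_{\min}$, giving $P(i\to j^*)\ge \frac{1}{\pi}\arctan(\sqrt{3})\bigl(\frac{1}{1+\frac{4}{K}g(\sigma^2_{\bs})(d^{(i)}_{\min})^2}\bigr)^K$, and invoke the elementary identity $\arctan(\sqrt{3})=\pi/3$ so that the prefactor collapses to $\frac{1}{\pi}\cdot\frac{\pi}{3}=\frac{1}{3}$; averaging over $i$ then produces $\frac{1}{3L}\sum_{i=1}^{L}\bigl(\frac{1}{1+\frac{4}{K}g(\sigma^2_{\bs})(d^{(i)}_{\min})^2}\bigr)^K$. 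The main obstacle is thus conceptual rather than computational: correctly orienting the containment argument in the lower bound, namely recognizing that a pairwise defeat by the nearest neighbor is a \emph{sufficient} (not necessary) condition for a multiclass error, which is exactly what licenses bounding $P_{\sf e}$ from below by a single binary PCEP.
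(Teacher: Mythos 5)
Your proof is correct and follows essentially the same route as the paper: the union bound with symmetry plus the PCEP upper bound at $d_{\sf min}$ (relaxing $\frac{L-1}{2}$ to $\frac{L}{2}$) for the upper bound, and the nearest-neighbor pairwise event with the PCEP lower bound and $\arctan(\sqrt{3})=\pi/3$ for the lower bound. Your explicit justification that a pairwise defeat by $j^*$ is \emph{sufficient} for a multiclass error --- so that the binary PCEP legitimately lower-bounds $\Pr(\mathcal L(\bx)\neq i\mid y=i)$ --- is a point the paper leaves implicit in its definition of $\mathcal W_{i,j^*}$, and you resolve it in the correct direction.
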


\begin{remark}
(Effect of Number of Classes).
\emph{Apart from the effects of data SNR and class distance discussed earlier, one can further observe/infer from the above bounds that increasing  the number of classes,  $L$, makes the classification error probability grow. This is because that packing more classes into a fixed feature space reduces inter-class distances and thereby compromise their differentiability. }
\end{remark}

\subsection{$\epsilon$-Classification Capacity}
\label{sec: inst_class_capacity_selection}
The key step in deriving the $\epsilon$-classification capacity is to establish the equivalence between the classification capacity maximization via class selection and the Grassmannian packing problem.  To facilitate the analysis, we consider the scenario where the large-scale remote classifier at the server can support flexible classification as stated in the following assumption.

\begin{assumption}\label{Assump: library_distribution}
(Flexible Classification).
\emph{The server with a large class library supports classification of an arbitrary dataset (parameterized by a subspace set $
\mathcal U_{L}$) with the classification error probability $P_{\sf e}(R,\mathcal U_{L})$ in~\eqref{Def: Classification_Capacity_sel}.
}
\end{assumption}
In practice, large-scale classification realizes flexible classification  using a hierarchical architecture comprising a large number of component classifiers \cite{Yu2017SIGIR,Malerba2007JIIS}.

\subsubsection{Equivalence to Grassmannian packing}
Given $(N,K)$, a Grassmann manifold, $\mathcal{G}(N,K)$, refers to the space of $K$-dimensional subspaces embedded in the $N$-dimensional space, or equivalently the space of $N\times K$ unitary matrices.
Based on the definition of $\epsilon$-classification capacity in~\eqref{Def: Classification_Capacity_sel} and Assumption~\ref{Assump: library_distribution}, the subspace set $\mathcal U^*$ that represents the class selection for capacity maximization can be found by solving the following optimization problem:
\begin{equation}\label{opt_rrm}{\bf (P1)}\quad
\begin{aligned}
   \mathcal U^* = \arg \max_{ \mathcal U \in \mathcal{G}}\quad&C(\mathcal U)\\
    \text{s.t.}\quad\quad&P_{\sf e}(\mathcal U) \leq \epsilon, \
\end{aligned}\nn
\end{equation}
where $\mathcal{G} = \mathcal{G}(N,K)$, $C(\mathcal U) = C(R,\mathcal U)$ and $P_{\sf e}(\mathcal U) = P_{\sf e}(R,\mathcal U)$ with $N$, $K$, and $R$ in this subsection and omitted to simplify notation. Substituting the upper bound on $P_{\sf e}$ in Lemma~\ref{Lemma: PeBounds} into $(\bf P1)$, the problem can be recast as
\begin{equation}\label{opt_rrm1}{\bf (P2)}\quad
\begin{aligned}
   \mathcal U^* = \arg \max_{|\mathcal U| = L,\ \mathcal U \in \mathcal{G}}\quad&L\\
    \text{s.t.}\quad\quad&d_{\min} \geq \beta_L, \
\end{aligned}\nn
\end{equation}
where $\beta_L = \sqrt{\frac{\log_2\frac{L}{2\epsilon}}{\log_2(1+g(\sigma^2_{\bs}))}}$. The solution of $\bf (P2)$ lower bounds the maximum capacity from solving $\bf (P1)$ and the approximation is accurate when the error probability is small. An intuitive interpretation of Problem $(\bf P2)$ is to pack as many balls as possible (maximizing $L$), each centered at an element of $\mathcal U_L$ and with the radius  $\frac{d_{\min}}{2}$, into the space $\mathcal{G}$, giving the name \emph{Grassmannian packing} \cite{Conway1987Book}. A standard approach for solving this class of mathematical problems is to convert them into equivalent problems of maximizing the minimum separation distance among $L$ balls~\cite{Conway1987Book}:
\begin{equation}
\text{(Grassmannian Packing)}\quad  \mathcal U^*  = \arg\max_{\substack{\mathcal U\in \mathcal{G},\\ |\mathcal U|=L}} \ d_{\min}.
\end{equation}
Let $d^*_{\min}(L)$ denote the result from solving the above problem, called minimum class separation from packing. Then, $L$ is increased to reach the maximum value under the constraint $d^*_{\min}(L) \geq \beta_{L}$, thereby solving the original Problem $(\bf P1)$.

Though typically  Grassmannian  packing problems are intractable and usually solved numerically, there exists a rich literature on bounding the resultant minimum distance $d^{*}_{\min}(L)$. (see e.g.,~\cite{Conway1987Book}). The following particular result is from~\cite{Nogin2002TIT}.

\begin{lemma}\label{lemma:packing_bound}
(Packing Bounds).
\emph{
 For large feature-space  dimensions $N$, the minimum class separation distance from Grassmannian packing can be bounded as
\begin{equation}\label{Bounds_packing}
K L^{-\frac{2}{NK}} \lesssim [d^*_{\min}(L)]^2 \lesssim  2K\l[1-\l(1-L^{-\frac{2}{NK}}\r)^2\r], \qquad N\rightarrow\infty.
\end{equation}
}
\end{lemma}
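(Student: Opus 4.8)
The plan is to treat the statement as a pure fact about sphere packing on the Grassmann manifold $\mathcal G(N,K)$ under the chordal distance, divorced from the classification context: having already reduced capacity maximization to $\mathcal U^*=\arg\max_{|\mathcal U|=L}d_{\min}$, I only need two-sided asymptotics for the largest achievable minimum distance $d^*_{\min}(L)$. The two bounds are of opposite type, and I would obtain them from the two classical packing/covering arguments. For the lower bound (an achievability, Gilbert--Varshamov statement) I would argue that a \emph{maximal} packing of $L$ points cannot be extended, so the closed chordal balls of radius $d_{\min}$ centred at the $L$ subspaces cover $\mathcal G$; in normalized measure this reads $L\,\mu\!\left(B(d_{\min})\right)\ge 1$. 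For the upper bound (a converse, sphere-packing statement) I would use that the $L$ balls of radius $d_{\min}/2$ are pairwise disjoint, giving $L\,\mu\!\left(B(d_{\min}/2)\right)\le 1$. Both inequalities become explicit once the volume $\mu(B(\rho))$ of a chordal ball is known as a function of $\rho$, so the whole proof is driven by a single ingredient: the asymptotic volume of metric balls on $\mathcal G(N,K)$.

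To compute that volume I would parametrize a ball by principal angles. By rotation invariance $\mu(B(\rho))$ equals the probability that a Haar-random $K$-subspace $V$ obeys $\sum_{k=1}^K\sin^2\theta_k(U,V)\le\rho^2$, and the squared cosines $\lambda_k=\cos^2\theta_k$ are the eigenvalues of $U^{T}VV^{T}U$, which follow a real Jacobi (MANOVA) ensemble with density proportional to $\prod_{i<j}|\lambda_i-\lambda_j|\prod_k\lambda_k^{-1/2}(1-\lambda_k)^{(N-2K-1)/2}$. For fixed $K$ and $N\to\infty$ the factor $\prod_k(1-\lambda_k)^{(N-2K-1)/2}$ dominates, and a Laplace/large-deviation evaluation of the constrained integral (the extremal configuration pushes all $\lambda_k$ toward a common value fixed by the constraint, the Vandermonde and $\lambda_k^{-1/2}$ factors contributing only $O(1)$ in the exponent) yields the leading behaviour $\mu(B(\rho))\doteq\left(\rho^2/K\right)^{NK/2}$. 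Inserting this into the covering inequality and solving for $d_{\min}$ gives $\left[d^*_{\min}(L)\right]^2\gtrsim K\,L^{-2/(NK)}$, which is exactly the stated lower bound; the exponent $2/(NK)$ is the reciprocal of the exponent $NK/2$ in the ball volume, i.e. of half the manifold dimension $K(N-K)\sim NK$.

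For the upper bound the same volume estimate already yields $\left[d^*_{\min}(L)\right]^2\lesssim 4K\,L^{-2/(NK)}$, which matches the claimed $2K\bigl[1-(1-L^{-2/(NK)})^2\bigr]$ to first order in $L^{-2/(NK)}$ but is weaker at second order. To recover the exact form I would pass to the embedding of $\mathcal G(N,K)$ onto the sphere of projection matrices $P_i=\bU_i\bU_i^{T}$, under which $\langle P_i,P_j\rangle=\operatorname{tr}(\bU_i\bU_i^{T}\bU_j\bU_j^{T})=K-d_{i,j}^2$ and $\|P_i\|_F^2=K$, so that the embedded points have mutual angle $\Theta_{i,j}$ with $\cos\Theta_{i,j}=1-d_{i,j}^2/K$. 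Applying a spherical packing bound to these points produces separation estimates governed by $\sin^2\Theta_{\min}=1-(1-d_{\min}^2/K)^2$, and it is precisely this quantity, rather than $d_{\min}^2/K$ itself, that gets controlled at level $L^{-2/(NK)}$; rearranging then delivers the factor $2K\bigl[1-(1-L^{-2/(NK)})^2\bigr]$.

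The main obstacle is therefore not the packing/covering dichotomy, which is routine, but obtaining sharp \emph{two-sided} asymptotics of $\mu(B(\rho))$ with the correct constants and the correct second-order ($1-(1-\cdot)^2$) dependence, rather than only the leading exponential rate. This is a delicate metric-entropy computation on $\mathcal G(N,K)$ and is exactly the technical content of the packing result of Nogin cited as \cite{Nogin2002TIT}; accordingly I would either reproduce that computation or, as the paper does, invoke it directly. A secondary point to handle with care is uniformity: both $\mu(B(\rho))\doteq(\rho^2/K)^{NK/2}$ and the embedding-based bound are asymptotic in $N$, so the ``$\lesssim$'' and ``$\gtrsim$'' must be read in the limit $N\to\infty$ with the sub-exponential prefactors absorbed, consistent with the ``$N\rightarrow\infty$'' qualifier already present in the statement.
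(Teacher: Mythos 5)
The paper never proves Lemma~\ref{lemma:packing_bound} at all: it is imported verbatim from Barg and Nogin \cite{Nogin2002TIT}, so the ``paper's proof'' is the citation, and your sketch is in substance a correct reconstruction of how the cited result is actually established. Your Gilbert--Varshamov side is sound: the squared principal cosines do follow the real Jacobi density with exponents $\lambda_k^{-1/2}(1-\lambda_k)^{(N-2K-1)/2}$, the Laplace maximizer of the constrained integral is the symmetric point $1-\lambda_k=\rho^2/K$ for all $k$, the Vandermonde and normalization are subexponential for fixed $K$, and $\mu(B(\rho))\doteq(\rho^2/K)^{NK/2}$ plus the maximal-code argument gives $[d^*_{\min}(L)]^2\gtrsim KL^{-2/(NK)}$. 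Two points in the converse need care, though. First, the embedding must be centered at $(K/N)\mathbf{I}$, so the sphere radius is $\sqrt{K(N-K)/N}$ and $\cos\Theta_{ij}=1-d^2_{i,j}N/(K(N-K))$; your uncentered formula $\cos\Theta=1-d^2_{i,j}/K$ agrees only up to $O(K/N)$, harmless under the $N\to\infty$ qualifier but worth stating. Second, ``applying a spherical packing bound'' must not mean a volume bound on the ambient sphere (dimension of order $N^2$, which would produce the wrong exponent $L^{-4/N^2}$): the angular metric serves only to certify that caps of angular radius $\Theta_{\min}/2$ are disjoint; one then pulls these caps back to Grassmannian balls of squared chordal radius approximately $K\bigl(1-\cos(\Theta_{\min}/2)\bigr)$ and reuses the same Grassmannian volume estimate, whence $\cos(\Theta_{\min}/2)\geq 1-L^{-2/(NK)}$ and the half-angle identity $1-\cos\Theta=2\bigl(1-\cos^2(\Theta/2)\bigr)$ yields exactly $2K\bigl[1-\bigl(1-L^{-2/(NK)}\bigr)^2\bigr]$, improving the naive chordal Hamming bound $4KL^{-2/(NK)}$ at second order, as you observed. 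With those clarifications your outline is a faithful self-contained route to the lemma, which the paper itself simply buys by citation.
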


\subsubsection{Packing bounds on $\epsilon$-classification capacity}

Using Lemmas~\ref{Lemma: PeBounds} and \ref{lemma:packing_bound},  the $\epsilon$-classification capacity defined in~\eqref{Def: Classification_Capacity_sel}   can be bounded as $C_{\sf lb}\leq C^{\sf sel}(R) \leq C_{\sf ub}$ with 
\begin{align}
C_{\sf lb}&=  \left\{L:   \frac{L}{2}\left(\frac{1}{1+g(\sigma_{\bs}^2)}\right)^{{d^2_{\sf lb}}}=  \epsilon \right\}\label{LB: C2_Capacity},\\ 
C_{\sf ub}&=  {\left\{L: \frac{1}{3} \l(\frac{1}{1+\frac{4}{K}g(\sigma^2_{\bs})d^2_{\sf ub}}\r)^K = \epsilon \right\},}\label{UB: C2_Capacity}
\end{align}
{where $d^2_{\sf lb} = K L^{-\frac{2}{NK}}$, $d^2_{\sf ub} = 2K\l(1-\l(1-L^{-\frac{2}{NK}}\r)^2\r)$ and~\eqref{UB: C2_Capacity} follows by substituting $\{d_{\sf{min}}^{(i)}\}$ in the lower bound of Lemma~\ref{Lemma: PeBounds} with $d_{\sf ub}$ as $[d_{\sf{min}}^{(i)}]^2 \leq K\leq d^2_{\sf ub}, \forall i$.} Note that $\lfloor\cdot\rfloor$ is omitted  as $\lfloor d^2_{\sf lb} \rfloor\rightarrow d^2_{\sf lb} $ for  large $N$ and $K$, which is a typical  case in multimedia  classification. Solving the two equations \eqref{LB: C2_Capacity} and \eqref{UB: C2_Capacity} and substituting $N =\beta R$ yield the following  theorem. 
\begin{theorem}\label{Theorem: Bounds_C2_Capacity}
($\epsilon$-Classification Capacity with Class Selection).
\emph{Consider the class-selection case. For a high communication rate, the capacity can be asymptotically bounded as:
\begin{equation}\label{Eq: bounds_classification_capacity}
2^{\frac{\beta R}{2}\l(K\log_2{K}+ Kc_{\sigma_{\bs}} - Kc_{\epsilon}\r)} \lesssim C^{\sf sel}(R) \lesssim {2^{\frac{\beta R}{2}\l(K\log_24K+K\log_2\frac{4g(\sigma^2_{\bs})}{1-3\epsilon}\r)}, \quad R\to \infty,}
\end{equation}
where $c_{\sigma^2_{\bs}} = \log_2\log_2(1+ g(\sigma^2_{\bs}))$ and $c_{\epsilon} = \log_2 \log_2\frac{1+g(\sigma^2_{\bs})}{2\epsilon}$. In particular, as $R, K \to \infty$, the capacity scales as
\begin{equation}\label{Eq: scaling_classification_capacity_class-selection}
\lim_{R,K\to \infty} \frac{\log_2C^{\sf sel}(R,K)}{RK\log_2K} = \frac{\beta}{2}.
\end{equation}
}
\end{theorem}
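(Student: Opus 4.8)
The plan is to sandwich $C^{\sf sel}(R)$ between the quantities $C_{\sf lb}$ and $C_{\sf ub}$ of \eqref{LB: C2_Capacity}--\eqref{UB: C2_Capacity}, each obtained by feeding the Grassmannian packing estimates of Lemma~\ref{lemma:packing_bound} into the error bounds of Lemma~\ref{Lemma: PeBounds}, and then to resolve the two resulting implicit equations for $L$ as $R\to\infty$. For the achievability (lower) bound I would invoke Lemma~\ref{lemma:packing_bound} to assert the existence of a packing of $L$ classes with minimum separation $d_{\min}^2\ge d_{\sf lb}^2=KL^{-2/(NK)}$; substituting this into the union upper bound $P_{\sf e}\le\frac{L}{2}\big(1+g(\sigma^2_{\bs})\big)^{-d_{\min}^2}$ of Lemma~\ref{Lemma: PeBounds} (dropping the floor, which is negligible for large $N,K$) shows that every $L$ with $\frac{L}{2}\big(1+g(\sigma^2_{\bs})\big)^{-d_{\sf lb}^2}\le\epsilon$ is admissible, which is exactly \eqref{LB: C2_Capacity}. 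The converse (upper) bound is symmetric: I would insert the packing upper bound $d_{\sf ub}^2=2K\big(1-(1-L^{-2/(NK)})^2\big)$ into the error lower bound of Lemma~\ref{Lemma: PeBounds}, bounding each summand uniformly, to obtain \eqref{UB: C2_Capacity}.

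The core analytic step is solving \eqref{LB: C2_Capacity}. I would rewrite it as $KL^{-2/(NK)}=\log_{1+g(\sigma^2_{\bs})}\frac{L}{2\epsilon}$, isolate the slowly varying factor $L^{-2/(NK)}$, and apply $-\frac{NK}{2}\log_2(\cdot)$ to both sides to bring the exponent to the left:
\begin{equation}
\log_2 L=\frac{NK}{2}\log_2\!\left(\frac{K\log_2\big(1+g(\sigma^2_{\bs})\big)}{\log_2\frac{L}{2\epsilon}}\right),\nn
\end{equation}
an implicit relation for $L=C_{\sf lb}$. Expanding the logarithm of the ratio produces the three advertised constants $\log_2K$, $c_{\sigma_{\bs}}=\log_2\log_2\big(1+g(\sigma^2_{\bs})\big)$ and $c_{\epsilon}$, and substituting $N=\beta R$ yields the stated lower bound. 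The converse follows by the same isolation-and-logarithm manoeuvre applied to \eqref{UB: C2_Capacity}, where the factor $4K$ and the $1-3\epsilon$ term arise from expanding $d_{\sf ub}^2$ for small $L^{-2/(NK)}$ together with the prefactor $1/3$.

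For the scaling law I would take the double limit $R,K\to\infty$ in both asymptotic bounds and divide by $RK\log_2K$. In each the leading exponent is $\frac{\beta}{2}RK\log_2K$, while every remaining contribution---the $Kc_{\sigma_{\bs}}$, $Kc_{\epsilon}$, $\log_2 4K$ and $\log_2\frac{4g(\sigma^2_{\bs})}{1-3\epsilon}$ terms---is $o(K\log_2K)$ and therefore washes out after normalization, so the squeeze $C_{\sf lb}\le C^{\sf sel}\le C_{\sf ub}$ forces the common limit $\frac{\beta}{2}$.

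The step I expect to be the main obstacle is the asymptotic resolution of these transcendental equations, because $d_{\sf lb}^2$ and $d_{\sf ub}^2$ are themselves functions of $L$ through $L^{-2/(NK)}$, so no closed-form inversion exists. The delicate point is the dominant balance: one must decide which term controls $\log_2 L$ and, crucially, must verify that the value of $L$ used to evaluate the residual factor $\log_2\frac{L}{2\epsilon}$ inside the outer logarithm is consistent with the $\log_2 L$ it produces. This is a fixed-point condition, and its self-consistency in the regime $R,K\to\infty$---rather than the mechanical algebra---is where the real care is needed, since it is what decides whether the floor term and the $O(1)$ constants are genuinely negligible against $\frac{NK}{2}\log_2K$.
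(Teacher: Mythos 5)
Your proposal is correct and follows essentially the same route as the paper: sandwiching $C^{\sf sel}(R)$ between the solutions of \eqref{LB: C2_Capacity} and \eqref{UB: C2_Capacity} (the packing bounds of Lemma~\ref{lemma:packing_bound} fed into the error bounds of Lemma~\ref{Lemma: PeBounds}), resolving the two implicit equations in $L$ asymptotically---with the same small-$L^{-2/(NK)}$ expansion of $d^2_{\sf ub}$ producing the $4K$ and $1-3\epsilon$ factors---and then squeezing after dividing by $RK\log_2 K$ for the scaling law. The fixed-point subtlety you flag is precisely the step the paper's appendix glosses over (it replaces the $L$-dependent inner logarithm $\log_2\log_2\frac{L}{2\epsilon}$ by the constant $c_{\epsilon}$ under a ``high data SNR'' remark), so your attempt is, if anything, more explicit about the one genuinely delicate point.
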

\proof
See Appendix~\ref{Proof: Bounds_C2_Capacity}.
\endproof

\begin{remark}
(Mathematical Intuition for  Capacity Scaling Laws).
\emph{One can observe from the above theorem that the $\epsilon$-classification capacity increases \emph{exponentially}  as the instantaneous communication rate $R$ grows. The underpinning mathematical reason is that  the volume of the Grassmann manifold containing the dataset classes 
is an \emph{exponential function} of its dimensions $N$, which  is  proportional to $R$. Consequently, increasing $R$ allows an exponentially growing number of ``balls" (classes) to be packed into the manifold.  One the other hand, the  capacity scales \emph{super-exponentially} with the dimensions of each data cluster (or each class), namely $K$. Note that increasing $K$ improves the inter-class differentiability. One can infer from \eqref{LB: C2_Capacity} and~\eqref{UB: C2_Capacity} that with the classification error probability fixed,  the allowed number of ``balls" ($L$)  grows \emph{exponentially} as the  minimum pairwise distance of the ``balls" (classes), $d^*_{\min}$, increases. Furthermore, $d^*_{\min}$ is a \emph{super-linear} function of $K$ as one can further observe from  the definitions of $d^2_{\sf ub}$ and $d^2_{\sf lb}$ after \eqref{UB: C2_Capacity}. Combining the two relations gives the super-exponential capacity  scaling w.r.t. $K$. 
}
\end{remark}

\begin{remark}
(Effects of QoE Requirement and Data/Transmit SNR).
\emph{
The dependence of the $\epsilon$-classification capacity on the allowed maximum classification error probability $\epsilon$ (or QoE requirement), the data SNR $\sigma^2_{\bs}$, and the transmit SNR $\rho$ can be interpreted geometrically in terms of Grassmannian packing. Increasing $\epsilon$, $\sigma^2_{\bs}$ and $\rho$ allows ``balls" (classes) to get closer, shrinks ``ball radiuses" (the variance of each data cluster), and increasing the Grassmannian volume (communication rate), respectively. They  all  contribute to packing more ``balls" (larger capacity) though in different ways. 
}

\end{remark}

\subsection{Ergodic and Outage Classification Capacities}\label{Rayleigh_assumption}
Given a distribution function of the communication rate $R$, it is straightforward to use the results in Theorem~\ref{Theorem: Bounds_C2_Capacity} to analyze the ergodic and outage classification capacities based on their definitions in~\eqref{eqn: ergodic_capacity_definition} and~\eqref{Def: Outage_Classification_Capacity}. In this section, we consider a Rayleigh fading channel and perform such analysis to provide concrete insight into the effect of channel fading on the  performance of remote classification. 

\subsubsection{Ergodic classification capacity}
Correspondently, the ergodic channel capacity is $\bar{R} = \mathbb E[B\log_2(1+ \rho|h|^2)]$, where $ \rho$ is the transmit SNR and  the channel gain  $|h|^2=\exp(1)$.

\begin{proposition}\label{Proposition: bounds_Ergodic_capacity}
(Ergodic Classification Capacity for Rayleigh Fading). 
\emph{Consider the class-selection case. The ergodic classification capacity defined in~\eqref{eqn: ergodic_capacity_definition} can be bounded as 
\begin{equation}\label{Eq: Lower_bound_ergodic_capacity}
\sqrt{2\pi\gamma_{\sf lb}}\cdot  \rho^{\gamma_{\sf lb}} \cdot e^{\gamma_{\sf lb}(\log \gamma_{\sf lb}-1)} \leq \bar{C} \leq \sqrt{2\pi\gamma_{\sf ub}} \cdot \rho^{\gamma_{\sf ub}} \cdot e^{\gamma_{\sf ub}(\log \gamma_{\sf ub}-1)}, 
\end{equation}
where $\gamma_{\sf lb} = \frac{\beta B}{2}\l(K\log_2{K}+ Kc_{\sigma_{\bs}} - Kc_{\epsilon}\r)$ and ${\gamma_{\sf ub} = \frac{\beta B}{2}\l(K\log_24K+ K\log_2\frac{4g(\sigma^2_{\bs})}{1-3\epsilon}\r)}$ with $c_{\sigma_{\bs}}$ and $c_{\epsilon}$ defined in Theorem~\ref{Theorem: Bounds_C2_Capacity}. In particular, for large $\bar{R}$ and $K$, the capacity scales as 
\begin{equation}\label{Eq: scaling_ergodic_classification_capacity_class-selection}
\lim_{\bar{R}, K\to \infty} \frac{\log_2\bar{C}}{\bar{R}K\log_2K} = \frac{\beta}{2},
\end{equation}
where $\beta = \frac{T}{Q}$.}
\end{proposition}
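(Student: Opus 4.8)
The plan is to transfer the pointwise bounds on $C^{\sf sel}(R)$ established in Theorem~\ref{Theorem: Bounds_C2_Capacity} through the expectation in the definition~\eqref{eqn: ergodic_capacity_definition}. Writing the two asymptotic bounds as $2^{a_{\sf lb}R}\lesssim C^{\sf sel}(R)\lesssim 2^{a_{\sf ub}R}$ with $a_{\sf lb}=\frac{\beta}{2}(K\log_2 K+Kc_{\sigma_{\bs}}-Kc_\epsilon)$ and $a_{\sf ub}=\frac{\beta}{2}(K\log_2 4K+K\log_2\frac{4g(\sigma_{\bs}^2)}{1-3\epsilon})$, monotonicity of the expectation immediately gives $\mathbb E_R[2^{a_{\sf lb}R}]\lesssim\bar C\lesssim\mathbb E_R[2^{a_{\sf ub}R}]$. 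Thus the whole proposition reduces to evaluating a single exponential moment $\mathbb E_R[2^{cR}]$ of the rate under Rayleigh fading and then specializing $c$ to $a_{\sf lb}$ and $a_{\sf ub}$.

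For the moment evaluation I would substitute the SISO rate $R=B\log_2(1+\rho|h|^2)$, which converts the exponential into a polynomial: $2^{cR}=(1+\rho|h|^2)^{cB}$. With $|h|^2\sim\exp(1)$ this becomes $\mathbb E_R[2^{cR}]=\int_0^\infty(1+\rho g)^{cB}e^{-g}\,dg$. The plan is to clear the $+1$ and the scale $\rho$ by the successive changes of variable $t=1+\rho g$ and $s=t/\rho$, recasting the integral as an upper incomplete gamma function, $\mathbb E_R[2^{cR}]=e^{1/\rho}\rho^{cB}\,\Gamma(cB+1,1/\rho)$. In the large-$\rho$ (equivalently large-$\bar R$) regime targeted by the proposition, $e^{1/\rho}\to1$ and $\Gamma(cB+1,1/\rho)\to\Gamma(cB+1)$, leaving $\mathbb E_R[2^{cR}]\simeq\rho^{cB}\,\Gamma(cB+1)$.

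The final step applies Stirling's formula $\Gamma(\gamma+1)\simeq\sqrt{2\pi\gamma}\,\gamma^\gamma e^{-\gamma}=\sqrt{2\pi\gamma}\,e^{\gamma(\log\gamma-1)}$ with $\gamma=cB$. Setting $c=a_{\sf lb}$ gives $\gamma=a_{\sf lb}B=\gamma_{\sf lb}$ and $c=a_{\sf ub}$ gives $\gamma=\gamma_{\sf ub}$, reproducing exactly the two sides of~\eqref{Eq: Lower_bound_ergodic_capacity}. The scaling law~\eqref{Eq: scaling_ergodic_classification_capacity_class-selection} then follows by taking $\log_2$ of either bound: among the three resulting terms, $\gamma\log_2\rho$ dominates because $\bar R\to\infty$ forces $\rho\to\infty$ while $\gamma$ stays at its value $\frac{\beta B}{2}K\log_2 K\,(1+o(1))$; using $\bar R\simeq B\log_2\rho$ to write $\log_2\rho\simeq\bar R/B$ yields $\log_2\bar C\simeq\frac{\beta}{2}K\log_2 K\cdot\bar R$, and dividing by $\bar R K\log_2 K$ gives $\frac{\beta}{2}$ for both bounds, so a squeeze delivers the stated limit.

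The main obstacle I anticipate is rigour in moving the asymptotic (high-$R$) bounds of Theorem~\ref{Theorem: Bounds_C2_Capacity} inside an expectation that integrates $R$ over all of $[0,\infty)$, combined with replacing the incomplete gamma by the complete one. I would address this by observing that the weight $2^{cR}$ concentrates the distribution toward large $R$, so the region where the asymptotic bounds and the replacement $\Gamma(cB+1,1/\rho)\to\Gamma(cB+1)$ are loose contributes a vanishing fraction of the integral as $\rho\to\infty$; a tail-splitting or dominated-convergence argument makes this precise. A secondary point to verify is finiteness of the moment, which holds because $(1+\rho g)^{cB}$ grows only polynomially against the $e^{-g}$ tail.
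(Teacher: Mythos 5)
Your proposal is correct and follows essentially the same route as the paper's own proof: both substitute the exponential bounds of Theorem~\ref{Theorem: Bounds_C2_Capacity} into $\bar C=\mathbb E_R\l[C^{\sf sel}(R)\r]$, reduce the exponential moment $\mathbb E\l[(1+\rho|h|^2)^{\gamma}\r]$ at high SNR to $\rho^{\gamma}\,\Gamma(\gamma+1)$ using $|h|^2\sim\exp(1)$, apply Stirling's formula with $\gamma=\gamma_{\sf lb}$ and $\gamma=\gamma_{\sf ub}$, and extract the scaling law from $\bar R\approx B\log_2\rho$ together with a squeeze as $K\to\infty$. Your exact incomplete-gamma identity $\mathbb E[2^{cR}]=e^{1/\rho}\rho^{cB}\,\Gamma(cB+1,1/\rho)$ and your explicit flagging of the interchange between the high-$R$ asymptotics of Theorem~\ref{Theorem: Bounds_C2_Capacity} and the expectation are minor refinements that, if anything, make the argument slightly more rigorous than the paper's direct high-SNR approximation.
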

\proof
See Appendix~\ref{Proof: Bounds_Ergodic_Capacity}.
\endproof

\begin{remark}
(Fading Does Not Affect Capacity Scaling).\label{Re:FadingEffect}
\emph{
The key observation  from the above proposition is that both the scaling laws of the ergodic classification capacity are the same as those for $\epsilon$-classification capacity in Theorem~\ref{Theorem: Bounds_C2_Capacity} except for replacing the instantaneous rate $R$ with its ergodic counterpart $\bar{R}$. The remark also applies to outage classification capacity analyzed in the sequel if the communication rate is  modified as the maximum rate under an outage constraint.
}
\end{remark}

\subsubsection{Outage classification capacity}\label{Subsection: outage_classification_capacity_fixed} To begin with, the maximum communication rate, denoted as  $R_{\delta}$, can be obtained from the active outage constraint  $P_{\sf{out}} = \Pr(R\leq R_{\delta}) \leq \delta$ and the exponential distribution of the channel gain as
\begin{equation}\label{Eq: outage}
R_{\delta} = B\log_2\l(1 + \rho\log\l(\frac{1}{1-\delta}\r)\r).
\end{equation}   
It is worth mentioning that  $R_{\delta}$ is a monotonically increasing function of the outage probability $\delta$. Moreover,  note that the corresponding number of transmitted features per sample is now given as  $N = \beta R_{\delta}$. The outage classification capacity is equal to the $\epsilon$-classification capacity by replacing $R$ with $R_{\delta}$ in \eqref{Eq: outage}, yielding the following proposition. 

\begin{proposition}\label{Proposition: outage_classification_capacity}
(Outage Classification Capacity for Rayleigh Fading)
\emph{
Consider the class-selection case. The outage  classification capacity defined in~\eqref{Def: Outage_Classification_Capacity} can be bounded as 
\begin{equation}\label{eqn: C2capacity_LB_fixed_model}
 \left[1+\rho\log\left(\frac{1}{1-\delta}\right)\right]^{\gamma_{\sf{lb}} }\;  \leq \; \ C_{\sf{out}} \leq \left[1+\rho\log\left(\frac{1}{1-\delta}\right)\right]^{\gamma_{\sf{ub}}},
\end{equation}
with $\rho \gg 1$, ${\gamma_{\sf{lb}} }$ and ${\gamma_{\sf{ub}} }$ defined in Proposition~\ref{Proposition: bounds_Ergodic_capacity}. In particular, as $R_{\delta}, K\to \infty$, the capacity scales as 
\begin{equation}\label{Eq: scaling_outage_classification_capacity_class-selection}
\lim_{{R}_{\delta},K\to \infty} \frac{\log_2{C}_{\sf out}}{{R}_{\delta}K\log_2K} = \frac{\beta}{2}.
\end{equation}
}
\end{proposition}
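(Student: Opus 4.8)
The plan is to obtain this proposition as a corollary of Theorem~\ref{Theorem: Bounds_C2_Capacity} by collapsing the outage maximization onto a single rate. The only genuinely new ingredient is the reduction $C_{\sf out}=C^{\sf sel}(R_\delta)$; everything after that is substitution into the already-established $\epsilon$-capacity bounds followed by a squeeze argument.

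\emph{Step 1 (reduce the max to an evaluation at $R_\delta$).} For Rayleigh fading, $R=B\log_2(1+\rho|h|^2)$ with $|h|^2$ exponentially distributed has a continuous, strictly increasing CDF, so $P_{\sf out}(r)=\Pr(R\le r)$ is strictly increasing and the feasible set in~\eqref{Def: Outage_Classification_Capacity}, namely $\{r:P_{\sf out}(r)\le\delta\}$, is exactly the interval $r\le R_\delta$ with $R_\delta$ the active-constraint rate in~\eqref{Eq: outage}. I would then argue that $C^{\sf sel}(r)$ is nondecreasing in $r$: since $N=\beta r$, a larger $r$ enlarges the ambient dimension of the Grassmann manifold $\mathcal{G}(N,K)$ in Problem~$(\mathbf{P2})$. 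Zero-padding any unitary $\bU_i\in\mathbb{R}^{N\times K}$ to $[\bU_i^{T},\mathbf{0}]^{T}\in\mathbb{R}^{N'\times K}$ with $N'>N$ preserves $\bU_i^{T}\bU_i=\bI_K$ and leaves $\tr\{\bU_i\bU_i^{T}\bU_j\bU_j^{T}\}$, hence every chordal distance $d_{i,j}$, unchanged, so any feasible packing for dimension $N$ remains feasible for $N'$. Thus $d^*_{\min}(L)$ is nondecreasing in $N$, the feasible set $\{L:d^*_{\min}(L)\ge\beta_L\}$ can only grow, and $C^{\sf sel}(r)$ is nondecreasing. The maximum over $r\le R_\delta$ is therefore attained at $r=R_\delta$, giving $C_{\sf out}=C^{\sf sel}(R_\delta)$.

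\emph{Step 2 (substitute $R_\delta$).} With $R_\delta=B\log_2(1+\rho\log\frac{1}{1-\delta})$, note $R_\delta/B=\log_2(1+\rho\log\frac{1}{1-\delta})$. Plugging $R=R_\delta$ into the lower bound of Theorem~\ref{Theorem: Bounds_C2_Capacity}, its exponent is $\frac{\beta R_\delta}{2}(K\log_2 K+Kc_{\sigma_{\bs}}-Kc_\epsilon)=\gamma_{\sf lb}\,(R_\delta/B)=\gamma_{\sf lb}\log_2(1+\rho\log\frac{1}{1-\delta})$, with $\gamma_{\sf lb}$ as in Proposition~\ref{Proposition: bounds_Ergodic_capacity}. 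Applying the identity $2^{a\log_2 b}=b^a$ collapses the lower bound to $[1+\rho\log\frac{1}{1-\delta}]^{\gamma_{\sf lb}}$, and the upper bound follows verbatim with $\gamma_{\sf ub}$, yielding~\eqref{eqn: C2capacity_LB_fixed_model}. The hypothesis $\rho\gg1$ is exactly what forces $R_\delta\to\infty$, placing it in the high-rate regime where the $\lesssim$ bounds of Theorem~\ref{Theorem: Bounds_C2_Capacity} are valid.

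\emph{Step 3 (scaling law).} Taking $\log_2$ of the two bounds gives $\log_2 C_{\sf out}=\gamma\,(R_\delta/B)$ for $\gamma\in\{\gamma_{\sf lb},\gamma_{\sf ub}\}$, so that $\frac{\log_2 C_{\sf out}}{R_\delta K\log_2 K}$ is squeezed between $\frac{\beta}{2}\big(1+\frac{c_{\sigma_{\bs}}-c_\epsilon}{\log_2 K}\big)$ and $\frac{\beta}{2}\big(1+\frac{2+\log_2\frac{4g(\sigma^2_{\bs})}{1-3\epsilon}}{\log_2 K}\big)$, where $K\log_2 4K=K\log_2 K+2K$ was used on the upper side and $R_\delta$ cancels. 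Since $c_{\sigma_{\bs}},c_\epsilon,g(\sigma^2_{\bs})$ are $O(1)$ in $K$, both correction terms vanish as $K\to\infty$, and the squeeze theorem delivers the limit $\frac{\beta}{2}$ in~\eqref{Eq: scaling_outage_classification_capacity_class-selection}. The main obstacle is Step~1: rigorously justifying that the outage maximum sits at $R_\delta$ requires monotonicity of $C^{\sf sel}(r)$ itself, not merely of its bounds, which I would secure through the zero-padding packing-nesting argument above; the remaining steps are routine substitution and asymptotics.
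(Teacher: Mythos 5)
Your proposal is correct and follows essentially the same route as the paper: substitute $R_{\delta}$ from~\eqref{Eq: outage} into the bounds of Theorem~\ref{Theorem: Bounds_C2_Capacity} to obtain~\eqref{eqn: C2capacity_LB_fixed_model}, then show $R_{\delta}\to\infty$ as $\rho\to\infty$ and let $R_{\delta},K\to\infty$ to squeeze out the limit $\frac{\beta}{2}$. The only divergence is your Step~1, where the paper simply asserts that the bounds are ``straightforward by substituting $R_{\delta}$'' (i.e., that the outage maximum is attained at the constraint boundary), while you rigorously establish the monotonicity of $C^{\sf sel}(r)$ in $r$ via the zero-padding embedding of packings from $\mathcal{G}(N,K)$ into $\mathcal{G}(N',K)$ --- a sound argument (chordal distances and the subspace ML statistics are indeed invariant under zero-padding) that fills a gap the paper leaves implicit.
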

\proof
See Appendix~\ref{Proof: scaling_outage_Capacity}.
\endproof

\section{Classification Capacity with Random Classes}
\label{sec:Pe_random_data}

In this section, the $\epsilon$-classification capacity and its ergodic and outage counterparts are analyzed for the random-class case and compared with their counterparts in the class-selection case.

\subsection{Expected Classification Error Probability}
The  expected classification error probability is analyzed  in this subsection for  a dataset  with i.i.d. isotropic classes, $\{\mathbf{U}_\ell\}$, on the Grassmannian $\mathcal{G}(N,K)$.

\subsubsection{Distribution of class  separation}
Let $\theta_{\max}$ denote the maximum principal angle between a pair of  classes, $\mathbf{U}_i$ and $\mathbf{U}_j$.

\begin{lemma}\label{statistic_information}
(Class Separation Distribution~\cite{b5}).
\emph{The PDF of $X = \sin^2\theta_{\max}$ is given as 
\begin{equation}
\label{eqn:PDF_of_sin_square_theta_subspace}
f_X(x)=c_{N,K,\theta_{\max}} \ \tensor*[_2]{F}{_1}\l(\frac{N-K-1}{2}, \frac{1}{2};\frac{N + 1}{2};\sin^2\theta_{\max}{\bf I}_{K-1}\r),
\end{equation}
where $c_{N,K,\theta_{\max}}  =K(N-K)\frac{\Gamma\left(\frac{K+1}{2}\right)\Gamma\left(\frac{N-K+1}{2}\right)}{\sqrt{\pi}\Gamma\left(\frac{N+1}{2}\right)}(\sin \theta_{\max})^{K(N-K)-1}$ and $\tensor*[_2]{F}{_1}(\cdot)$ denotes the Gaussian hypergeometric function  with a matrix argument. 
}
\end{lemma}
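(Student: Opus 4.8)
The plan is to recover this known distribution (from~\cite{b5}) by reducing the two-subspace problem to a fixed-versus-random subspace problem, identifying the principal angles with the eigenvalues of a matrix-variate Beta (Jacobi) ensemble, and then extracting the density of the extreme eigenvalue. First I would exploit isotropy: because $\bU_i$ and $\bU_j$ are i.i.d.\ draws from the rotation-invariant measure on $\cG(N,K)$, the joint law of the $K$ principal angles depends only on the relative orientation of the two subspaces. I may therefore condition on $\bU_i$ and align it with the first $K$ coordinate axes, $\bU_i = [\bI_K,\ \b0]^T$, while $\bU_j$ stays uniform. Representing $\bU_j = \bG(\bG^T\bG)^{-1/2}$ with $\bG \in \mathbb R^{N\times K}$ Gaussian and partitioning $\bG^T\bG = \bG_1^T\bG_1 + \bG_2^T\bG_2$ into the Gram contributions of its top $K\times K$ block $\bG_1$ and bottom $(N-K)\times K$ block $\bG_2$, the squared cosines $\cos^2\theta_k$ emerge as the eigenvalues of $(\bG^T\bG)^{-1/2}\bG_1^T\bG_1(\bG^T\bG)^{-1/2}$. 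Since $\bG_1^T\bG_1 \sim W_K(K,\bI)$ and $\bG_2^T\bG_2 \sim W_K(N-K,\bI)$ are independent Wisharts, this matrix is matrix-variate Beta, and its ordered eigenvalues have the classical real Jacobi joint density proportional to $\prod_k \lambda_k^{-1/2}(1-\lambda_k)^{(N-2K-1)/2}\prod_{i<j}|\lambda_i - \lambda_j|$.

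Next, since $X = \sin^2\theta_{\max}$ is the largest of the variables $x_k = \sin^2\theta_k = 1 - \lambda_k$, I would compute its density as an extreme-value (order-statistic) density of this ensemble: pin the largest variable at $x = \sin^2\theta_{\max}$ and integrate the remaining $K-1$ variables over $[0,x]^{K-1}$. The Vandermonde factor splits into the part coupling the pinned variable to the rest and the part among the rest, so the remaining $(K-1)$-fold integral is again a truncated Jacobi integral over a box. The crucial analytic step is to evaluate this truncated integral in closed form: by the classical Constantine--Khatri representation of such integrals (see Muirhead's treatment of the largest latent root of a Beta matrix), it equals a constant times a power of $\sin\theta_{\max}$ multiplied by the Gauss hypergeometric function of matrix argument $\tensor*[_2]{F}{_1}(\tfrac{N-K-1}{2},\tfrac12;\tfrac{N+1}{2}; \sin^2\theta_{\max}\,\bI_{K-1})$. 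This step supplies precisely the $(K-1)$-dimensional matrix argument appearing in~\eqref{eqn:PDF_of_sin_square_theta_subspace}.

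Finally, I would collect constants. The pinned-variable weight, the prefactor of the $(K-1)$-fold integral, and the Jacobian of the change of variables $x = \sin^2\theta$ combine into the scalar factor $(\sin\theta_{\max})^{K(N-K)-1}$, whose exponent is $\dim\cG(N,K) - 1$; geometrically this reflects that $\Pr(\theta_{\max}\le\phi)$ is the normalized volume of a metric ball of radius $\phi$ on the Grassmannian, which grows like $(\sin\phi)^{K(N-K)}$ to leading order. The normalizing constant $c_{N,K,\theta_{\max}}$ then follows from the ratio of real multivariate Gamma functions $\Gamma_K$ in the Jacobi normalization, which telescopes to the stated ratio $\Gamma(\tfrac{K+1}{2})\Gamma(\tfrac{N-K+1}{2})/(\sqrt{\pi}\,\Gamma(\tfrac{N+1}{2}))$ together with the combinatorial factor $K(N-K)$. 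The main obstacle is exactly this last bookkeeping: evaluating the truncated multivariate Jacobi integral as a matrix-argument $\tensor*[_2]{F}{_1}$ and tracking its prefactor and the $x=\sin^2\theta$ Jacobian so that the multivariate-Gamma normalization collapses \emph{exactly} to $c_{N,K,\theta_{\max}}$, including reconciling the precise power of $\sin\theta_{\max}$; the reduction via isotropy and the Wishart-ratio identification of the principal angles are by contrast routine.
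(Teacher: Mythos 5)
The paper contains no proof of this lemma: it is imported verbatim from~\cite{b5} (Absil--Edelman--Koev), and the appendices only invoke the corresponding CDF. So the comparison is between your derivation and the cited one, and your route is sound --- it is essentially the classical argument underlying~\cite{b5}: isotropy reduction to a fixed reference subspace, identification of the $\cos^2\theta_k$ with the eigenvalues of a $K\times K$ matrix-variate Beta built from independent $W_K(K,\mathbf{I})$ and $W_K(N-K,\mathbf{I})$ Wisharts (your Jacobi exponents $-1/2$ and $(N-2K-1)/2$ are correct, under the implicit standing assumption $N\geq 2K$ so the joint density exists), and Constantine--Khatri theory for the extreme root. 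The one genuine methodological difference: \cite{b5} computes the CDF $\Pr(\theta_{\max}\leq\theta)$ first, via Constantine's theorem on $\Pr(\Lambda < x\mathbf{I})$ --- which yields the ${}_2F_1$ with $K$-dimensional matrix argument that this paper actually uses in Appendix E --- and then obtains the PDF by a dimension-dropping differentiation identity for the matrix-argument ${}_2F_1$ (their key technical lemma), whereas you pin the largest root and evaluate the truncated $(K-1)$-fold Jacobi integral by Khatri's representation, producing the $\mathbf{I}_{K-1}$ argument in one shot. Both rest on the same classical machinery and both succeed; yours delivers the density directly, theirs delivers the CDF (the quantity needed downstream in Lemma~\ref{lemma:CDF_subspace_disntance_bounds}) for free.

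One caution on your final bookkeeping. As displayed, the formula is the density of $\theta_{\max}$ with respect to $\theta$, not of $X=\sin^2\theta_{\max}$, despite the lemma's labeling: check $K=1$, $N=2$, where the formula evaluates to the constant $2/\pi$ --- the uniform density of the angle between two random lines in the plane --- while $\sin^2\theta_{\max}$ follows the arcsine law. Your plan of folding the pinned weight $(1-x)^{-1/2}$ and the Jacobian $dx/d\theta = 2\sin\theta\cos\theta$ into the prefactor (the cosines cancel, which is exactly why no $\cos\theta_{\max}$ factor appears in the result) in fact lands you on the $\theta$-density, so your constants will reconcile with the stated $c_{N,K,\theta_{\max}}$; but you should state explicitly which variable your density is in, since otherwise the exponent $(\sin\theta_{\max})^{K(N-K)-1}$ versus the $x^{K(N-K)/2-1}$ you would get for a genuine density of $X$ will appear to disagree by precisely that Jacobian.
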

The squared chordal distance between $\bU_i$ and $\bU_j$ is defined as  $d^2_c(\bU_i,\bU_j) = K-\text{tr}\{\bU_i\bU_i^T\bU_j\bU_j^T\}$. Using Lemma~\ref{statistic_information}, an upper  bound on the \emph{cumulative distribution function} (CDF) of the  distance is derived as shown in the lemma below.
\begin{lemma}\label{lemma:CDF_subspace_disntance_bounds}
(Upper Bound on Class Separation Distribution). \emph{Consider a pair of  independent and isotropic classes $\bU_i$ and $\bU_j$ on the Grassmannian $\mathcal{G}(N,K)$.   The CDF of their squared chordal distance $d^2_c(\bU_i,\bU_j)$, denoted as $F_{d_{c}^2}(x)$,  can be bounded as
\begin{equation}\label{eqn:CDF_subspace_distance_UB}
F_{d_{c}^2}(x)\leq \left(\frac{x}{K}\right)^{\frac{K(N-K)}{2}}, \ x\in[0,K].
\end{equation}
}
\end{lemma}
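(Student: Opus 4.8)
The plan is to reduce the claim to the one-dimensional distribution of $X=\sin^2\theta_{\max}$ supplied by Lemma~\ref{statistic_information}, and then to integrate that density. First I would express the squared chordal distance through the $K$ principal angles $\theta_1,\dots,\theta_K$ between $\bU_i$ and $\bU_j$. Since $\tr\{\bU_i\bU_i^T\bU_j\bU_j^T\}=\sum_{k=1}^K\cos^2\theta_k$, the definition of the distance gives $d_c^2(\bU_i,\bU_j)=\sum_{k=1}^K\sin^2\theta_k$. Each summand is nonnegative and the largest equals $\sin^2\theta_{\max}=X$, so $d_c^2\ge X$. Hence $\{d_c^2\le x\}\subseteq\{X\le x\}$ and therefore $F_{d_c^2}(x)\le F_X(x)=\int_0^x f_X(t)\,dt$, which moves the entire problem onto the density of Lemma~\ref{statistic_information}.

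Second, I would carry out the integration. Writing $\sin\theta_{\max}=\sqrt t$, the prefactor $c_{N,K,\theta_{\max}}$ supplies the monomial $t^{[K(N-K)-1]/2}$, so that
\begin{equation}
F_X(x)\le A\int_0^x t^{\frac{K(N-K)-1}{2}}\ {}_2F_1\!\l(\tfrac{N-K-1}{2},\tfrac12;\tfrac{N+1}{2};t\,\bI_{K-1}\r)dt,\nn
\end{equation}
with $A=K(N-K)\,\Gamma(\tfrac{K+1}{2})\Gamma(\tfrac{N-K+1}{2})/[\sqrt\pi\,\Gamma(\tfrac{N+1}{2})]$. The Gaussian hypergeometric factor has positive parameters and satisfies $c-a-b=\tfrac{K+1}{2}>0$, so it is finite on $[0,1]$ and nondecreasing in its matrix argument; bounding it by its value at the upper limit collapses the right-hand side to an elementary monomial integral, leaving a power of $x$ multiplied by a ratio of Gamma functions.

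Third, I would reconcile the constant. The target bound carries the clean normalization $K^{-K(N-K)/2}$, that is, the inverse of the maximal squared distance $d_{c,\max}^2=K$ raised to half the real dimension $\dim\mathcal G(N,K)=K(N-K)$. I would simplify $\Gamma(\tfrac{N-K+1}{2})/\Gamma(\tfrac{N+1}{2})$ using a Gamma-ratio inequality of Gautschi/Kershaw type, fold in the beta-integral normalization, and verify that the product is dominated by $K^{-K(N-K)/2}$ while the surviving power of $x$ is no larger than $K(N-K)/2$. A convenient sanity check is the scalar case $K=1$: there $X=d_c^2\sim\mathrm{Beta}(\tfrac{N-1}{2},\tfrac12)$, and the claimed bound $x^{(N-1)/2}$ is exactly the leading term of the incomplete-beta CDF, with equality at the endpoint $x=1=K$.

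The hard part will be the second and third steps taken together. The crude pull-out of ${}_2F_1$ is sharp only for small $x$ and must be controlled carefully if the bound is to be certified uniformly on all of $[0,K]$; simultaneously, distilling the tangle of Gamma functions into the single factor $K^{-K(N-K)/2}$ is delicate. A cleaner alternative that bypasses the special functions is to read $F_{d_c^2}(x)$ as the normalized invariant volume of a chordal ball of radius $\sqrt x$ on $\mathcal G(N,K)$ and to invoke a volume-ratio bound $\mathrm{vol}\,B(r)/\mathrm{vol}\,\mathcal G\le (r/\sqrt K)^{K(N-K)}$; this yields $(x/K)^{K(N-K)/2}$ directly, holds throughout $[0,K]$, and exhibits the bound as tight precisely at $x=K$, where both sides equal one.
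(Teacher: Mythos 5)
There is a genuine gap, and it sits in your very first reduction. From $d_c^2=\sum_{k=1}^K\sin^2\theta_k\geq\sin^2\theta_{\max}=X$ you conclude $F_{d_c^2}(x)\leq F_X(x)$, and then in your third step you hope to show $F_X(x)\leq (x/K)^{K(N-K)/2}$ on all of $[0,K]$. That target is false for $K\geq 2$: since $X=\sin^2\theta_{\max}\leq 1$ almost surely, $F_X(x)=1$ for every $x\in[1,K)$, while $(x/K)^{K(N-K)/2}<1$ there. No Gautschi/Kershaw Gamma-ratio inequality can rescue this, because the factor $K^{-K(N-K)/2}$ is not a constant to be distilled from the normalization at all --- the sharpest one can get along your route is $F_{d_c^2}(x)\leq F_X(x)\leq x^{K(N-K)/2}$, which is weaker than the lemma by exactly that factor. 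Your $K=1$ sanity check passes only because $1/K=1$ there, which masks the loss. The missing idea, and the one the paper uses, is geometric rather than analytic: for large $N$ two independent isotropic $K$-dimensional subspaces are quasi-orthogonal, so \emph{all} principal angles concentrate near $\theta_{\max}$ and $d_c^2\approx K\sin^2\theta_{\max}$; hence $F_{d_c^2}(x)\approx \Pr\left(\sin^2\theta_{\max}<\frac{x}{K}\right)$, which is where the $1/K$ inside the bound comes from. (Note the paper's own proof is therefore asymptotic in $N$ at this step, consistent with its large-$N$ regime.)

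Your second step is also more complicated than necessary, and as you yourself flag, uniformity of the endpoint bound is delicate in your version. The paper avoids integrating the density altogether: it starts from the closed-form CDF of $\theta_{\max}$, uses the same monotonicity trick you propose to replace the matrix argument $x{\bf I}_K$ by ${\bf I}_K$, and then invokes the exact Richards--Zhang identity
\begin{equation}
{}_2F_1\left(\frac{N-K}{2},\frac{1}{2};\frac{N+1}{2};{\bf I}_K\right)=\frac{\Gamma\left(\frac{N+1}{2}\right)\Gamma\left(\frac{1}{2}\right)}{\Gamma\left(\frac{K+1}{2}\right)\Gamma\left(\frac{N-K+1}{2}\right)},\nn
\end{equation}
so the Gamma prefactor cancels \emph{exactly} and $\Pr(\sin^2\theta_{\max}<x)\leq x^{K(N-K)/2}$ with constant one --- no Gamma-ratio estimates needed. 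Finally, your proposed ``cleaner alternative'' via a volume-ratio bound $\mathrm{vol}\,B(r)/\mathrm{vol}\,\mathcal{G}\leq (r/\sqrt{K})^{K(N-K)}$ is circular as stated: by invariance, $F_{d_c^2}(x)$ \emph{is} the normalized volume of the chordal ball of radius $\sqrt{x}$, so that inequality is precisely the lemma being proved and would need an independent proof or citation.
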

\begin{proof}
See Appendix~\ref{sec:appendix_proof_CDF_subspace_disntance_bounds}.
\end{proof}

\subsubsection{Expected classification error probability} Consider the ML classification of two random classes. Using Lemmas~\ref{Lemma:PCEPBound} and \ref{lemma:CDF_subspace_disntance_bounds},  the expected PCEP  can be bounded as follows.

\begin{lemma}\label{lemma:Expectation_PEP_bounds}
(Upper Bound on Expected PCEP).
\emph{
For a pair of independent and isotropic random classes $\bU_i$ and $\bU_j$,  the expected PCEP can be upper-bounded as
\begin{equation}
\mathbb E[P(i\to j)] \leq \frac{1}{2}\left(\frac{1}{1+g(\sigma_{\bs}^2)}\right)^{K}+\frac{\log\left(1+g(\sigma_{\bs}^2)\right)}{\left(1+g(\sigma_{\bs}^2)\right)}\frac{1}{N}.
\label{inqe:avg_PEP_UB}
\end{equation}
}
\end{lemma}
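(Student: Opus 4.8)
The plan is to take the PCEP upper bound of Lemma~\ref{Lemma:PCEPBound}, namely $P(i\to j)\le \frac{1}{2}\left(\frac{1}{1+g(\sigma_{\bs}^2)}\right)^{\lfloor d_{i,j}^2\rfloor}$, where $d_{i,j}^2$ equals the squared chordal distance $d_c^2(\bU_i,\bU_j)$, and average it over the random pair of classes. Because $\bU_i$ and $\bU_j$ are i.i.d. and isotropic on $\mathcal G(N,K)$, this bound depends on the subspaces only through the scalar $d_c^2$, so $\mathbb E[P(i\to j)]$ collapses to a one-dimensional expectation governed by the separation distribution of Lemmas~\ref{statistic_information}--\ref{lemma:CDF_subspace_disntance_bounds}. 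Writing $a=\frac{1}{1+g(\sigma_{\bs}^2)}\in(0,1)$ and treating the floored exponent as $d_c^2$ (the large-$(N,K)$ approximation already used elsewhere in the paper), the task reduces to upper-bounding $\frac{1}{2}\,\mathbb E[a^{d_c^2}]$.

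First I would convert the CDF bound of Lemma~\ref{lemma:CDF_subspace_disntance_bounds} into a bound on this expectation by integration by parts. Since $a^y$ is decreasing and $d_c^2$ is supported on $[0,K]$ with $F_{d_c^2}(0)=0$ and $F_{d_c^2}(K)=1$, one gets $\mathbb E[a^{d_c^2}] = a^K + (-\log a)\int_0^K a^y F_{d_c^2}(y)\,dy$. The boundary term $a^K$ already produces the leading term $\frac{1}{2}a^K$ of the claim once the prefactor is restored; geometrically this reflects that two random high-dimensional subspaces are nearly orthogonal, so that $d_c^2$ concentrates near its maximum $K$. Substituting $F_{d_c^2}(y)\le (y/K)^{K(N-K)/2}$ then controls the residual integral.

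The correction term is $\frac{1}{2}(-\log a)\int_0^K a^y (y/K)^{K(N-K)/2}\,dy$. Because the weight $(y/K)^{K(N-K)/2}$ puts essentially all of its mass near $y=K\ge 1$, I would bound $a^y\le a$ on $[1,K]$ (the contribution of $[0,1]$ is exponentially small), leaving $a\int_0^K (y/K)^{K(N-K)/2}\,dy = a\cdot\frac{K}{K(N-K)/2+1}$. For large $N$ this equals $\frac{2a}{N}(1+o(1))$, and combined with $-\log a=\log(1+g(\sigma_{\bs}^2))$ and $a=\frac{1}{1+g(\sigma_{\bs}^2)}$ it reproduces the stated correction $\frac{\log(1+g(\sigma_{\bs}^2))}{1+g(\sigma_{\bs}^2)}\frac{1}{N}$, the factor of two in $\frac{K}{K(N-K)/2+1}\approx\frac{2}{N}$ cancelling the prefactor $\frac{1}{2}$.

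The hard part will be the careful evaluation of $\int_0^K a^y (y/K)^{K(N-K)/2}\,dy$: one must simultaneously capture the exponentially small leading behaviour and pin down the precise $O(1/N)$ constant. A clean route is a Laplace/boundary expansion about $y=K$ (equivalently an incomplete-gamma evaluation), but the crude split at $y=1$ above already delivers the claimed constants to leading order. A secondary subtlety worth flagging is the floor $\lfloor d_c^2\rfloor$ in Lemma~\ref{Lemma:PCEPBound}: since $a^{\lfloor d_c^2\rfloor}\ge a^{d_c^2}$, replacing the floored exponent by $d_c^2$ is the same large-$(N,K)$ simplification adopted throughout the paper, and this is the step to make explicit when presenting the result as an inequality.
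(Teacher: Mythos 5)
Your proposal follows essentially the same route as the paper's own proof: average the PCEP bound of Lemma~\ref{Lemma:PCEPBound} over the chordal-distance distribution, integrate by parts to isolate the boundary term $\tfrac{1}{2}\left(1+g(\sigma_{\bs}^2)\right)^{-K}$, insert the CDF bound of Lemma~\ref{lemma:CDF_subspace_disntance_bounds}, and extract the $O(1/N)$ correction with the identical constant by splitting the integral near the boundary where the polynomial weight concentrates. The only substantive deviation is your replacement of $\lfloor d_{i,j}^2\rfloor$ by $d_c^2$, which, as you yourself flag, goes the wrong way as an inequality; the paper instead uses the majorization $a^{\lfloor x\rfloor}\le a^{x-1}$ with $a=\left(1+g(\sigma_{\bs}^2)\right)^{-1}$, which only shifts your split point (from $y=1$ to $x=2$) and leaves both the leading term and the asymptotic $\frac{\log\left(1+g(\sigma_{\bs}^2)\right)}{1+g(\sigma_{\bs}^2)}\frac{1}{N}$ constant unchanged.
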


\begin{proof}
See Appendix~\ref{sec:appendix_proof_Expectation_PEP_bounds}.
\end{proof}

 By applying the union bound and using  Lemma~\ref{lemma:Expectation_PEP_bounds}, we obtain the following lemma. 
 
 \begin{lemma}\label{lemma:Expectation_Pe}
(Expected Classification Error Probability).
\emph{
For  a dataset having $L$  independent and isotropic  classes $\mathcal{U}_L = \{\bU_\ell\}$,  the expected classification error probability  can be upper-bounded as
\begin{equation}
\label{UB: unionBound_Pe_randomData}
P^{\sf rnd}_{\sf e}(L, R) = \mathbb E_{\mathcal{U}_L}[ P_{\sf e}( \mathcal{U}_L, R)]\leq \frac{L}{2}\left[\frac{1}{2}\left(\frac{1}{1+g(\sigma_{\bs}^2)}\right)^{K}+\frac{\log\left(1+g(\sigma_{\bs}^2)\right)}{\left(1+g(\sigma_{\bs}^2)\right)}\frac{1}{N}\right],
\end{equation} 
where $N = \beta R$. 
}
\end{lemma}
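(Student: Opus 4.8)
The plan is to bound the expected classification error probability by a union bound over all ordered pairs of classes and then invoke Lemma~\ref{lemma:Expectation_PEP_bounds} for each pairwise term. First I would start from the definition of the conditional error probability in~\eqref{eqn: Pe_conditioned_on_R}, take the expectation over the isotropic class set $\mathcal{U}_L$, and interchange expectation with the finite sum. Then I would apply the union bound as in the earlier derivation of Lemma~\ref{Lemma: PeBounds}, writing
\begin{equation}\nn
\mathbb E_{\mathcal{U}_L}[P_{\sf e}] \leq \frac{1}{L}\sum_{i=1}^{L}\sum_{j\neq i}\mathbb E[P(i\to j)] = \frac{2}{L}\sum_{i=1}^{L-1}\sum_{j=i+1}^{L}\mathbb E[P(i\to j)],
\end{equation}
where the last equality uses the symmetry $P(i\to j)=P(j\to i)$ that already appeared in the $L$-class analysis for the class-selection case.

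Next I would observe that, because the classes are \emph{independent and isotropic}, every ordered pair $(i,j)$ with $i\neq j$ has the same pairwise statistics, so $\mathbb E[P(i\to j)]$ is identical across all pairs and equals the single-pair bound established in Lemma~\ref{lemma:Expectation_PEP_bounds}. Substituting that common upper bound into the double sum and counting the $\binom{L}{2}$ unordered pairs gives
\begin{equation}\nn
\mathbb E_{\mathcal{U}_L}[P_{\sf e}] \leq \frac{2}{L}\cdot\binom{L}{2}\cdot\mathbb E[P(i\to j)] = (L-1)\,\mathbb E[P(i\to j)] \leq L\,\mathbb E[P(i\to j)].
\end{equation}
Plugging in the explicit right-hand side of~\eqref{inqe:avg_PEP_UB} and absorbing the constant yields the claimed bound with the prefactor $\frac{L}{2}$, where the substitution $N=\beta R$ from the communication model records the dependence on the rate.

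The only delicate point is the step where I replace the sum over distinct pairs by $L$ times the common pairwise bound. I would need to confirm that isotropy and independence indeed make $\mathbb E[P(i\to j)]$ pair-invariant, so that Lemma~\ref{lemma:Expectation_PEP_bounds}, which is stated for a single generic pair, applies uniformly. Once that symmetry is justified, the remainder is pure bookkeeping: the union bound is standard and the per-pair bound is already proven, so the main content of the lemma is essentially a direct corollary of Lemma~\ref{lemma:Expectation_PEP_bounds} combined with the union bound. The factor $\frac{L}{2}$ versus $L-1$ is a harmless constant relaxation that keeps the expression clean and matches the form used later when solving for the random-class capacity $C^{\sf rnd}(R)$.
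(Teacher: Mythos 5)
Your route---union bound over ordered pairs, pair-invariance of $\mathbb E[P(i\to j)]$ from the i.i.d.\ isotropic class model, then substitution of Lemma~\ref{lemma:Expectation_PEP_bounds} for each pair---is exactly the paper's own (one-line) justification of this lemma, and the point you flag as delicate is in fact immediate: since the joint law of $(\bU_i,\bU_j)$ is the same for every pair $i\neq j$ and $P(i\to j)$ depends on the pair only through the principal angles, the expectation is identical across pairs; interchanging expectation with the finite sum is likewise unproblematic.

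The genuine gap is your final step. The union bound delivers $\mathbb E_{\mathcal U_L}[P_{\sf e}]\leq \frac{2}{L}\binom{L}{2}\,\mathbb E[P(i\to j)]=(L-1)\,\mathbb E[P(i\to j)]\leq L\,\mathbb E[P(i\to j)]$, and you then assert that ``absorbing the constant'' yields the stated prefactor $\frac{L}{2}$, calling the passage from $L-1$ to $\frac{L}{2}$ a harmless relaxation. It is the opposite of a relaxation: $\frac{L}{2}<L-1$ for all $L\geq 3$, so the bound~\eqref{UB: unionBound_Pe_randomData} is strictly \emph{stronger} than what your chain establishes, and it cannot be reached by weakening $(L-1)\,\mathbb E[P(i\to j)]$. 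Note why the analogous prefactor $\frac{L}{2}$ was legitimate in Lemma~\ref{Lemma: PeBounds}: there the factor $\frac{1}{2}$ came from the pairwise bound~\eqref{Upper: PEP} itself, so $(L-1)\cdot\frac{1}{2}\leq\frac{L}{2}$; here that same $\frac{1}{2}$ already sits inside the bracket of~\eqref{inqe:avg_PEP_UB} and cannot be harvested a second time. The defensible statement carries prefactor $L-1$ (or $L$) in front of the bracket; the $\frac{L}{2}$ appears to be a factor-of-two slip originating in the paper itself, which offers no more justification than the union bound you used. The discrepancy is immaterial downstream---it only halves the constant $\frac{2\beta\epsilon\left(1+g\left(\sigma_{\bs}^2\right)\right)}{\log\left(1+g\left(\sigma_{\bs}^2\right)\right)}$ in Theorem~\ref{Theorem: Bounds_Classification_Capacity_random}, leaving the linear scaling of $C^{\sf rnd}(R)$ in $R$ intact---but as written, your claim that $L\,\mathbb E[P(i\to j)]$ ``yields'' the bound with prefactor $\frac{L}{2}$ is a logical error, not bookkeeping: either prove the lemma with prefactor $L-1$ and propagate the constant, or supply an argument beyond the union bound.
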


\subsection{$\epsilon$-Classification Capacity}
The $\epsilon$-classification capacity defined in~\eqref{Def: Classification_Capacity_rnd} can be obtained by solving:
\begin{equation}\label{opt_rrm2}{\bf (P3)}\quad
\begin{aligned}
   C^{\sf rnd}(R) = \arg \max_{L}\quad& L\\
    \text{s.t.}\quad\quad& P^{\sf rnd}_{\sf e}(L, R) \leq \epsilon.
\end{aligned}\nn
\end{equation}
By modifying the constraint using~\eqref{UB: unionBound_Pe_randomData}, the capacity can be lower-bounded as follows. 
\begin{theorem}\label{Theorem: Bounds_Classification_Capacity_random}
($\epsilon$-Classification Capacity with Random Classes).
\emph{For a large communication rate, the $\epsilon$-classification capacity for the random-class case  can be asymptotically bounded as:
\begin{equation}\label{Eq: Lower_bound_ergodic_capacity_random}
 C^{\sf rnd}(R) \gtrsim \frac{2\beta\epsilon \left(1+g\left(\sigma_{\bs}^2\right)\right)}{\log\left(1+g\left(\sigma_{\bs}^2\right)\right)}R, \qquad R\rightarrow \infty.
\end{equation}
}
\end{theorem}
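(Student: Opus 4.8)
The plan is to lower-bound $C^{\sf rnd}(R)$ by exhibiting an explicit feasible value of $L$ for Problem $(\mathbf{P3})$. Since Lemma~\ref{lemma:Expectation_Pe} furnishes an upper bound $P^{\sf rnd}_{\sf e}(L,R)\le \mathrm{UB}(L,R)$ on the expected error probability, any $L$ for which $\mathrm{UB}(L,R)\le\epsilon$ automatically satisfies the constraint of $(\mathbf{P3})$ and is therefore admissible. Consequently $C^{\sf rnd}(R)$ is no smaller than the largest integer $L$ meeting this sufficient condition, and it suffices to solve $\mathrm{UB}(L,R)=\epsilon$ for $L$ and take the floor. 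The crucial structural fact is that the bound in Lemma~\ref{lemma:Expectation_Pe} is \emph{affine} in $L$, so this sufficient condition can be inverted in closed form.

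Writing $g := g(\sigma_{\bs}^2)$ and substituting $N=\beta R$, I would separate the bracketed factor in~\eqref{UB: unionBound_Pe_randomData} into a channel-independent data-noise floor $\tfrac12(1+g)^{-K}$ and a rate-dependent term $\tfrac{\log(1+g)}{1+g}\cdot\tfrac1N$ that decays like $\Theta(1/N)$. Setting $\tfrac{L}{2}\big[\tfrac12(1+g)^{-K}+\tfrac{\log(1+g)}{1+g}\tfrac1N\big]=\epsilon$ and solving gives the exact admissible count $L_{\max}=2\epsilon\big/\big[\tfrac12(1+g)^{-K}+\tfrac{\log(1+g)}{1+g}\tfrac1N\big]$. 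Retaining only the rate term in the denominator collapses this to the clean candidate $L=\tfrac{2\epsilon(1+g)}{\log(1+g)}\,N=\tfrac{2\beta\epsilon(1+g)}{\log(1+g)}\,R$, which is exactly the asserted lower bound after taking $\lfloor\cdot\rfloor$.

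The step I expect to be the main obstacle is justifying that the data-noise floor $\tfrac12(1+g)^{-K}$ may be dropped from the denominator, since it is precisely this reduction that converts the exact $L_{\max}$ into the stated linear form. This cannot be done for arbitrarily large $R$ at fixed $K$: substituting the candidate $L$ back into the full bound yields $\mathrm{UB}=\epsilon\big(1+\tfrac{\beta R}{2}\tfrac{(1+g)^{-K+1}}{\log(1+g)}\big)$, which exceeds $\epsilon$ unless the correction is negligible. I would therefore make the argument in the \emph{channel-limited} regime $\beta R\ll (1+g)^{K}$, where the floor is exponentially small in the cluster dimension $K$; here the $\Theta(1/N)$ term dominates the denominator, $L_{\max}$ is asymptotic to the linear expression, and the candidate $L$ is (asymptotically) feasible. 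This is consistent with the large-$K$/high-data-SNR regime already invoked in Theorem~\ref{Theorem: Bounds_C2_Capacity}, so I would state the bound under the same understanding that $K$ is large enough to keep the floor below the rate term.

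Finally, I would add an interpretive remark explaining why the scaling here is linear rather than exponential as in the class-selection case. The reason traces back to Lemma~\ref{lemma:Expectation_PEP_bounds}: for isotropically random classes the expected pairwise error decays only \emph{polynomially} (as $1/N$) in the feature dimension, rather than exponentially in an \emph{optimized} minimum separation as exploited by Grassmannian packing. Applying the union bound over the $\binom{L}{2}$ pairs then caps $L$ at order $N\propto R$, which is the mechanism behind the linear growth.
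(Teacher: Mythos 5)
Your proposal is correct and follows essentially the same route as the paper, which establishes the theorem simply by inverting the affine union bound of Lemma~\ref{lemma:Expectation_Pe} with $N=\beta R$ and discarding the data-noise floor (the paper provides no separate appendix proof for this theorem). Your explicit feasibility check and the regime condition $\beta R \ll (1+g(\sigma_{\bs}^2))^{K}$ for dropping the floor term $\tfrac{1}{2}(1+g(\sigma_{\bs}^2))^{-K}$ is in fact \emph{more} careful than the paper, which only gestures at this point in Remark~\ref{Re:CapacityRandom} by noting that the first term vanishes exponentially fast in $K$, implicitly assuming the same large-$K$ regime you identify.
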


\begin{remark}
(Mathematical Intuition for  Capacity Scaling Laws).
\label{Re:CapacityRandom}
\emph{As opposed to the \emph{exponential capacity} scaling for the class-selection case, the $\epsilon$-classification  capacity is shown in Theorem~\ref{Theorem: Bounds_Classification_Capacity_random} to scale \emph{linearly} w.r.t. the communication rate. Unlike deterministic classes resulting from Grassmannian packing in the former case, the random classes in the current case do not have a guaranteed minimum separation distance and the randomness in their  separations dramatically increases the classification error probability. As a result, the number of classes that can be contained in the Grassmannian has to be smaller so as to satisfy  a constraint on the expected separation distances, which determines  the expected classification error probability. This is the fundamental reason for much slower (linear) capacity scaling w.r.t. the  communication rate that determines the Grassmannian volume. On the other hand, the data-cluster dimensions $K$ does not appear in the scaling law as its effects on the classification error probability is negligible in the current case. This fact is reflected in the upper bound on the probability in Lemma~\ref{lemma:Expectation_PEP_bounds}  where  the second term independent of $K$  dominates  the first that varnishes exponentially fast as  $K$ increases.}
\end{remark}

\subsection{Ergodic and Outage Classification Capacities}
The linear scaling of the $\epsilon$-classification capacity w.r.t. to the communication rate $R$ makes it straightforward to extend the result to  ergodic and outage classification capacities by modifying $R$ accordingly, giving the following proposition. 

\begin{proposition}\label{Proposition: bounds_Ergodic_capacity_random}
(Ergodic and Outage Classification Capacities with Random Classes). 
\emph{The ergodic and outage classification capacities for the random-class case can be bounded as 
\begin{align}
  \bar{C}^{\sf rand} &\gtrsim \frac{2\beta\epsilon \left(1+g\left(\sigma_{\bs}^2\right)\right)}{\log\left(1+g\left(\sigma_{\bs}^2\right)\right)}\bar{R}, \quad \bar{R} \to \infty,\\
  {C}^{\sf{rand}}_{\sf out}&\gtrsim \frac{2\beta\epsilon \left(1+g\left(\sigma_{\bs}^2\right)\right)}{\log\left(1+g\left(\sigma_{\bs}^2\right)\right)}R_{\delta}, \quad R_{\delta} \to \infty,
\end{align} 
where $\bar{R}$ is the expected communication rate and $R_{\delta}$ the maximum rate under the outage constraint. 
}
\end{proposition}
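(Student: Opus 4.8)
The plan is to obtain both bounds as immediate consequences of the linear $\epsilon$-capacity bound $C^{\sf rnd}(R)\gtrsim cR$ from Theorem~\ref{Theorem: Bounds_Classification_Capacity_random}, where $c=\frac{2\beta\epsilon(1+g(\sigma^2_{\bs}))}{\log(1+g(\sigma^2_{\bs}))}$, by exploiting the two operations appearing in the definitions~\eqref{eqn: ergodic_capacity_definition} and~\eqref{Def: Outage_Classification_Capacity}: an expectation over the rate distribution in the ergodic case, and a constrained maximization over the rate in the outage case.

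For the ergodic capacity I would start from $\bar{C}^{\sf rand}=\mathbb E_R[C^{\sf rnd}(R)]$ and substitute the lower bound of Theorem~\ref{Theorem: Bounds_Classification_Capacity_random}. Because that bound is \emph{linear} in $R$, linearity of the expectation yields $\bar{C}^{\sf rand}\gtrsim c\,\mathbb E_R[R]=c\bar{R}$ with no further loss. The point worth stressing is precisely the linearity: for a concave capacity bound Jensen's inequality would only upper-bound $\mathbb E_R[C^{\sf rnd}(R)]$ by $C^{\sf rnd}(\bar{R})$, pointing the wrong way, whereas a linear bound makes the averaging exact. This is the same mechanism underlying Remark~\ref{Re:FadingEffect}, and it explains why fading leaves the scaling unchanged up to replacing $R$ with $\bar{R}$.

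For the outage capacity I would first establish that $C^{\sf rnd}(r)$ is non-decreasing in $r$. This follows from Lemma~\ref{lemma:Expectation_Pe}: increasing $r$ increases $N=\beta r$, which strictly decreases the $\tfrac1N$ term in the upper bound on $P^{\sf rnd}_{\sf e}(L,R)$, thereby relaxing the feasibility constraint in $({\bf P3})$ and enlarging the admissible $L$. Monotonicity implies that the maximization $\max_r\{C^{\sf rnd}(r)\mid P_{\sf out}(r)\le\delta\}$ is attained at the largest feasible rate, i.e. the maximum rate $R_{\delta}$ compatible with $P_{\sf out}(R_{\delta})\le\delta$. Substituting $r=R_{\delta}$ into Theorem~\ref{Theorem: Bounds_Classification_Capacity_random} then gives $C^{\sf rand}_{\sf out}=C^{\sf rnd}(R_{\delta})\gtrsim cR_{\delta}$.

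The main obstacle is the rigor of the ergodic step: Theorem~\ref{Theorem: Bounds_Classification_Capacity_random} is stated asymptotically ($R\to\infty$), so moving it inside $\mathbb E_R[\cdot]$ requires that the rate distribution concentrate in the linear regime, which is exactly the high-SNR regime $\bar{R}\to\infty$ in the statement. I would make this precise by working with the genuine (non-asymptotic) feasibility bound from Lemma~\ref{lemma:Expectation_Pe}, namely the largest $L$ satisfying $\frac{L}{2}\big[\frac12(1+g(\sigma^2_{\bs}))^{-K}+\frac{\log(1+g(\sigma^2_{\bs}))}{(1+g(\sigma^2_{\bs}))\beta R}\big]\le\epsilon$, and noting that in the large-$K$ regime the $K$-dependent constant term is negligible (cf. Remark~\ref{Re:CapacityRandom} and Lemma~\ref{lemma:Expectation_PEP_bounds}), so this bound collapses to the linear $cR$. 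The linear form then commutes with both the expectation and the substitution $r=R_{\delta}$ exactly, delivering the two displayed inequalities.
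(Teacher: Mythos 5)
Your proposal is correct and takes essentially the same route as the paper, which gives no separate appendix proof but justifies the proposition in one line: the linear lower bound of Theorem~\ref{Theorem: Bounds_Classification_Capacity_random} extends directly by replacing $R$ with $\bar{R}$ (linearity commuting exactly with $\mathbb{E}_R[\cdot]$, unlike the Jensen-mismatched concave case you rightly flag) or with the feasible maximizer $R_{\delta}$ in the outage definition. Your extra care --- working from the non-asymptotic feasibility bound of Lemma~\ref{lemma:Expectation_Pe} and invoking the large-$K$ regime of Remark~\ref{Re:CapacityRandom} so that the $K$-dependent constant term is negligible before averaging --- supplies rigor the paper leaves implicit but does not constitute a different argument.
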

A similar remark as Remark~\ref{Re:FadingEffect} can be made that fading affects the communication rate but does not change the capacity scaling laws w.r.t. to the rate, which are determined by the distribution of classes on the Grassmannian (see Remark~\ref{Re:CapacityRandom}).

\section{Extension to Fast Fading}
\label{Section:Fast_fading}

The preceding analysis assuming a static channel within each slot of transmitting a feature vector is extended to the case of channel variation within the slot due to fast fading. To this end, we modify the transmission and channel models as follows while other models and assumptions remain unchanged. To model fast fading, each slot is divided into sub-slots, over which the channel follows i.i.d. block fading. Considering an arbitrary slot, let $N$ features to be transmitted over the slot be divided into $S$ packets with $1\leq S \leq N$; each is transmitted using a sub-slot with a packet-loss probability (or equivalently outage probability) of $P_{\sf out} = \eta$. The features are extracted from the received packets and assembled as a single feature vector with missing features replaced by zeros, which is then used for classification. The variable $S$ is suitably called \emph{the fading speed}. {Consider the class-selection case where classes are packed on a Grassmannian embedded in the feature space. If the fraction of lost feature dimension is small, the classes constituting packing in the original space remains approximately so in the reduced-dimension space. Assuming such a case, the $\epsilon$-classification capacity is determined by the dimensionality of the latter space, or equivalently the number of successfully received features per sample, denoted as $N_x$. This also holds in the random-class case for a different reason that random erasures of some dimensions of the feature space does not change the isotropic distribution in the resultant space. The random variable} $N_x$ is determined by the number of successfully received packets, $X$, that follows the binomial distribution:
\begin{equation}\label{eqn: bionomial_num_features}
\Pr\l(X\r) = 
\binom{S}{n}(1-\eta)^{n}\eta^{S-n}.
\end{equation}
Given the average number of successfully received packets, $(1-\eta)S$, fixed, for large $S$, the distribution can be approximated as Poisson: 
\begin{equation}
\Pr(X = n) \approx  \frac{[(1-\eta)S]^ne^{-(1-\eta)S}}{n!},\qquad S \gg 1. 
\end{equation}
\subsubsection{Class-selection case}
Combining the approximate distribution function, $N_{x}=\frac{nN}{S}$ and Theorem~\ref{Theorem: Bounds_C2_Capacity}, the ergodic classification capacity is derived as 
\begin{equation}\label{Ineq: bouds_fast_fading}
 e^{(1-\eta)S\l(2^{\frac{\gamma_{\sf lb}}{\beta B}\frac{N}{S}}-1\r)}\leq \bar{C}\leq  e^{(1-\eta)S\l(2^{\frac{\gamma_{\sf ub}}{\beta B}\frac{N}{S}}-1\r)}, \ \ S\gg 1, \eta \ll 1. 
\end{equation}
Define the ergodic communication rate $\bar{R}=\frac{(1-\eta)N}{\beta}$. For the maximum fading speed $S=N$, the ergodic classification capacity scales as
\begin{equation}
\label{eq: scaling_fastErgodic}
 2^{\frac{\gamma_{\sf{lb}}}{\beta B}}-1 \leq \lim\limits_{  \bar{R} \rightarrow \infty}\frac{\log \bar{C}}{\beta \bar{R}} \leq 2^{\frac{\gamma_{\sf{ub}}}{\beta B}}-1 .
\end{equation}

The above results suggest the following. First, as the number of packets $S$ grows, both lower and upper bounds in~\eqref{Ineq: bouds_fast_fading} decrease, reflecting the effect of fast fading.  Next, the capacity scaling law in~\eqref{eq: scaling_fastErgodic} is exponential w.r.t. the ergodic communication rate as its slow-fading counterpart in Proposition~\ref{Proposition: bounds_Ergodic_capacity}. Therefore, the fading speed does not affect the classification-communication-rate  relation, {which is fundamentally attributed to class selection}, except for scaling the communication rate by the packet-success probability  $(1-\eta)$.

\subsubsection{Random-class case} The ergodic classification capacity in this case can be easily modified from its slow-fading counterpart in Proposition~\ref{Proposition: bounds_Ergodic_capacity_random} by redefining  the ergodic communication rate for the current case: 
\begin{equation}\label{Eq: scaling_ergodic_classification_capacity_class-random}
\bar{C}^{\sf rand} \gtrsim \frac{2\beta\epsilon \left(1+g\left(\sigma_{\bs}^2\right)\right)}{\log\left(1+g\left(\sigma_{\bs}^2\right)\right)}\bar{R}, \quad \bar{R} \to \infty, 
\end{equation}
where $\bar{R}=\frac{(1-\eta)N}{\beta}$. 
As before, the effect of fast fading is to scale the ergodic classification capacity by the packet-success probability $(1-\eta)$.

\section{Experimental Results}
\label{sec:experiments} 
\subsection{Experimental Settings}

Two sets of experimental results are obtained based on the statistical data model used in the preceding analysis and a real dataset, respectively. Their corresponding experiment settings are as follows. For all experiments, fading is modeled as Rayleigh, the transmit SNR is set as $15$ dB and channel bandwidth as $50$ KHz.
\begin{itemize}
\item \emph{Statistical data model:} The selected Grassmannian packing datasets were generated by Conway and Sloane~\cite{ConwayPack}. The maximum classification error probability is $0.03$ and $0.19$ for the class-selection and random-class cases, respectively, and the maximum (channel) outage probability is $0.3$. The data SNR is set as $15$ dB.
\item \emph{Real dataset:} The well known MNIST dataset is used that comprises images of handwritten numbers. {For inference, the popular neural network model, \emph{multi-layer perception} (MLP), is adopted as the classifier and trained using the training dataset of MNIST.} The maximum classification error probability is set as $0.02$.
\end{itemize}

\begin{figure*}[tt]
  \centering
  \subfigure[Class-Selection Case]{\label{Fig:a}\includegraphics[width=0.45\textwidth]{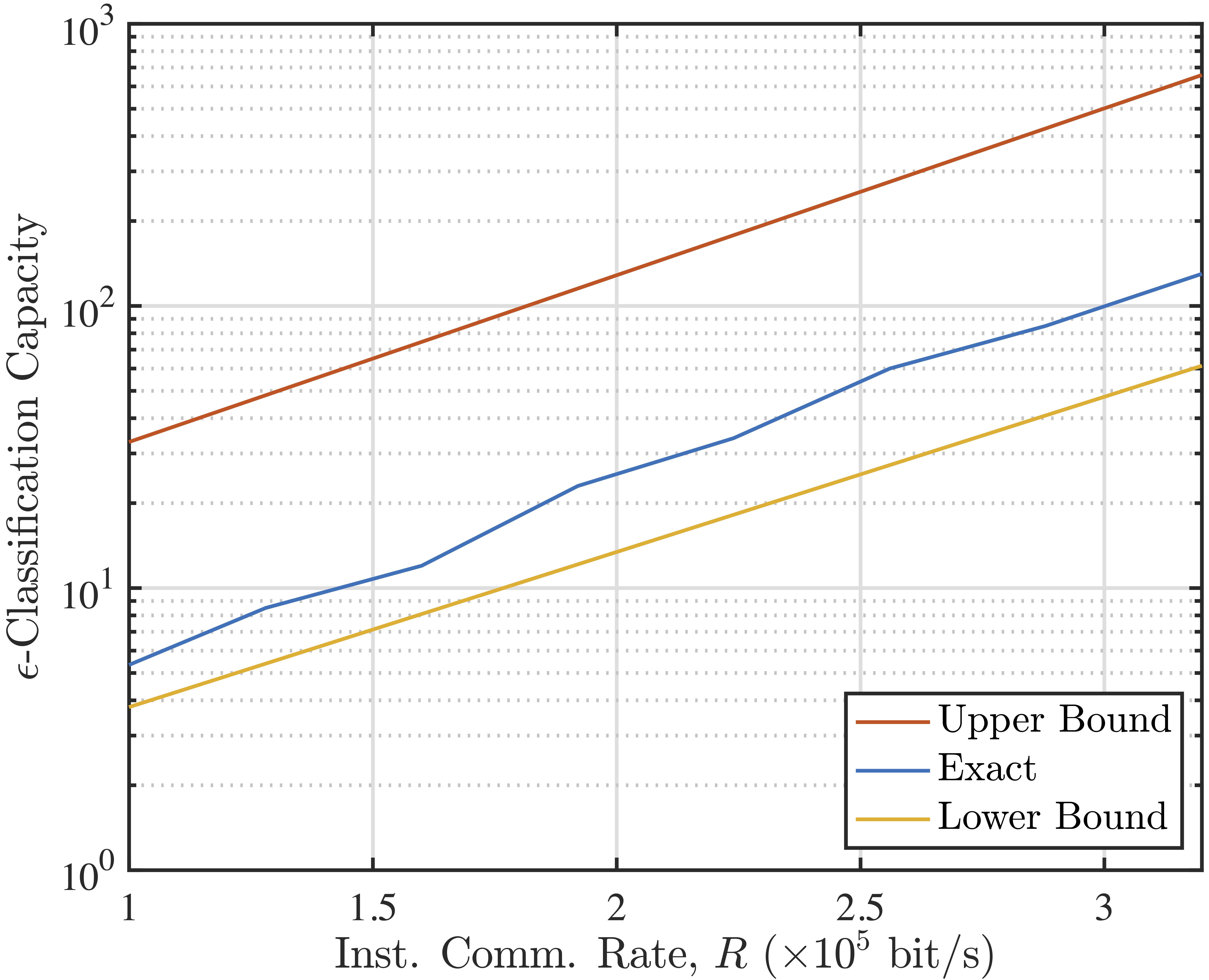}}
    \hspace{0.35in}
  \subfigure[Random-Class Case]{\label{Fig:b}\includegraphics[width=0.45\textwidth]{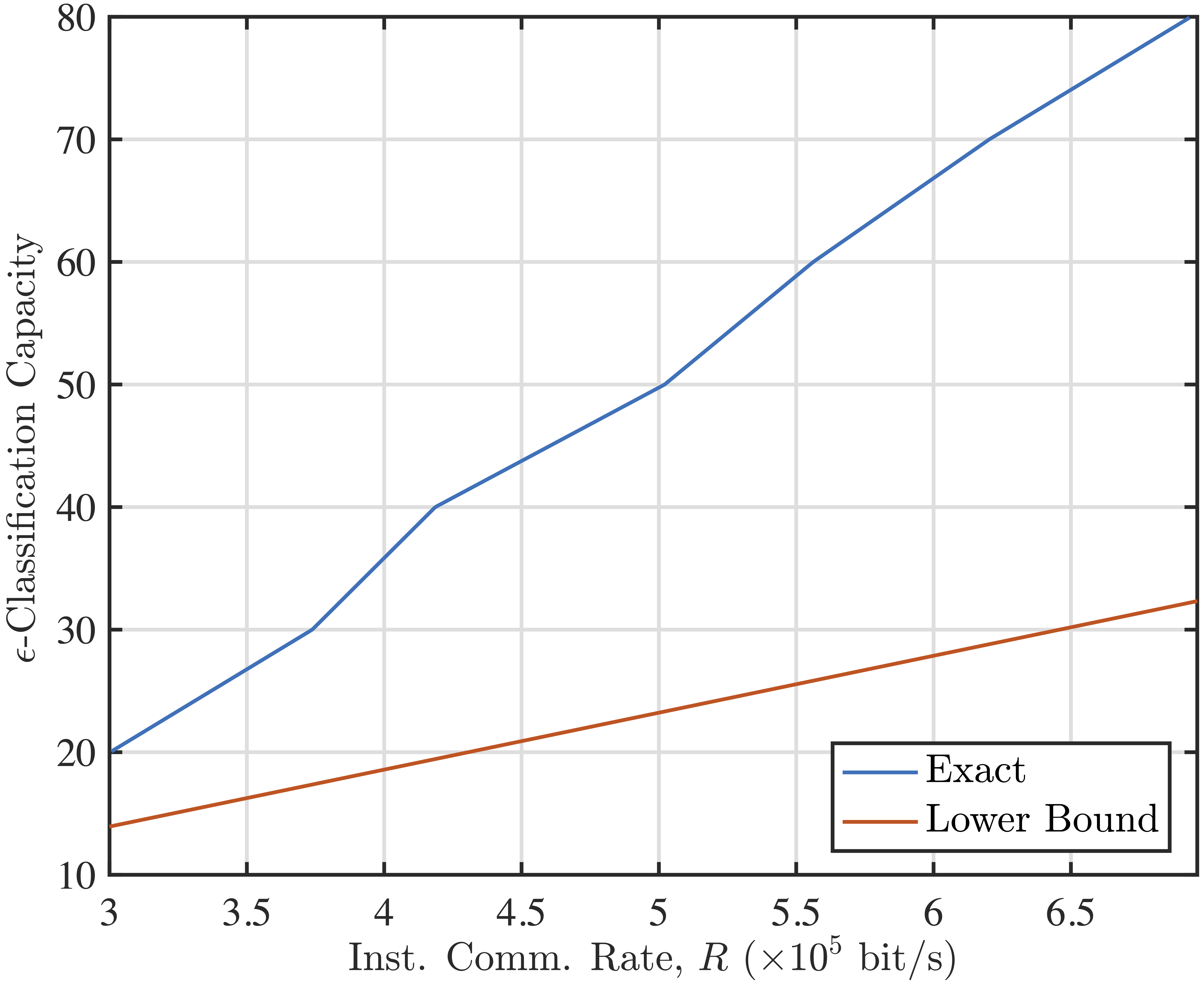}}
  \subfigure[Capacity Comparison for Class-Selection Case]{\label{Fig:c}\includegraphics[width=0.453\textwidth]{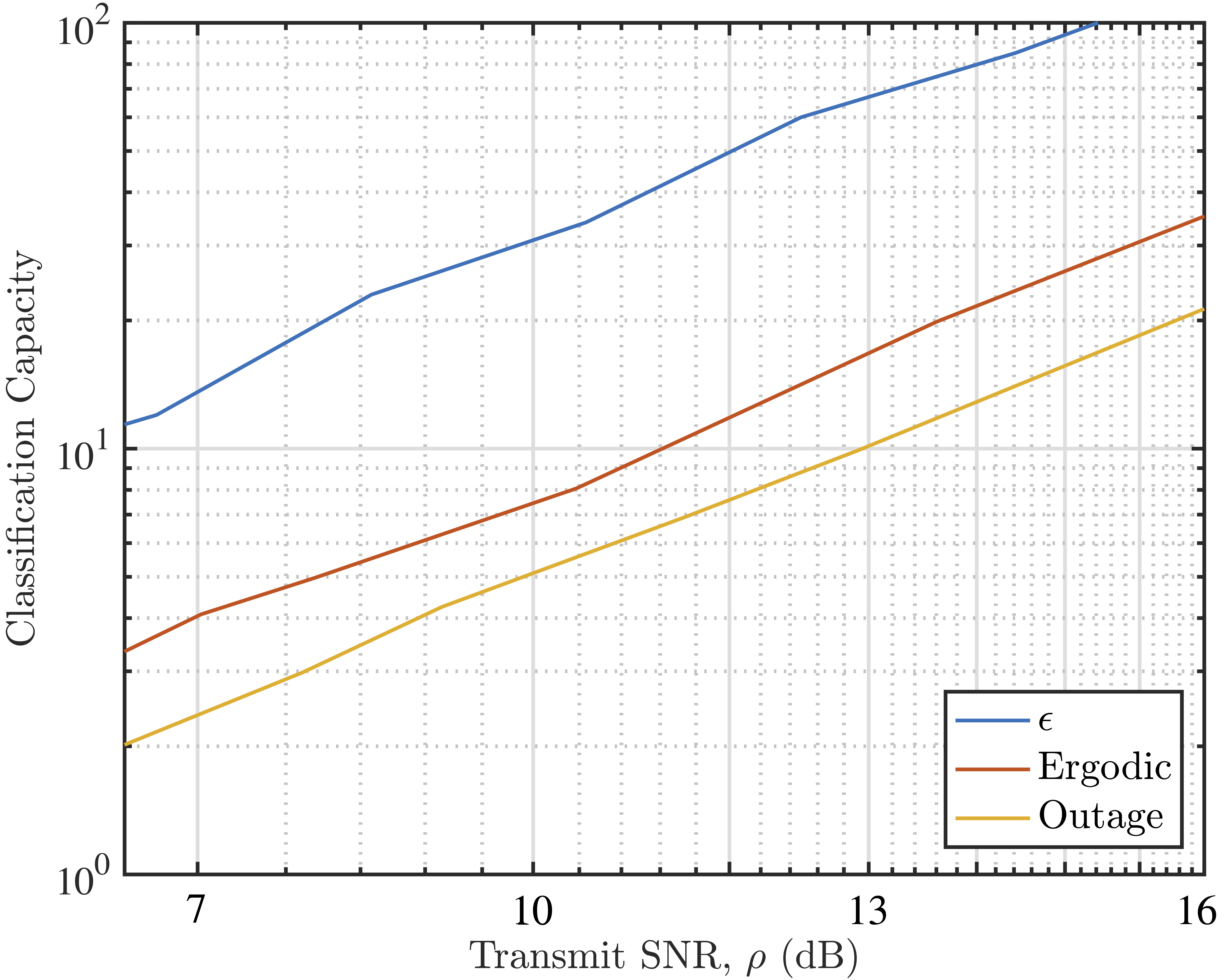}}
    \hspace{0.35in}
      \subfigure[Capacity Comparison for Random-Class Case]{\label{Fig:d}\includegraphics[width=0.445\textwidth]{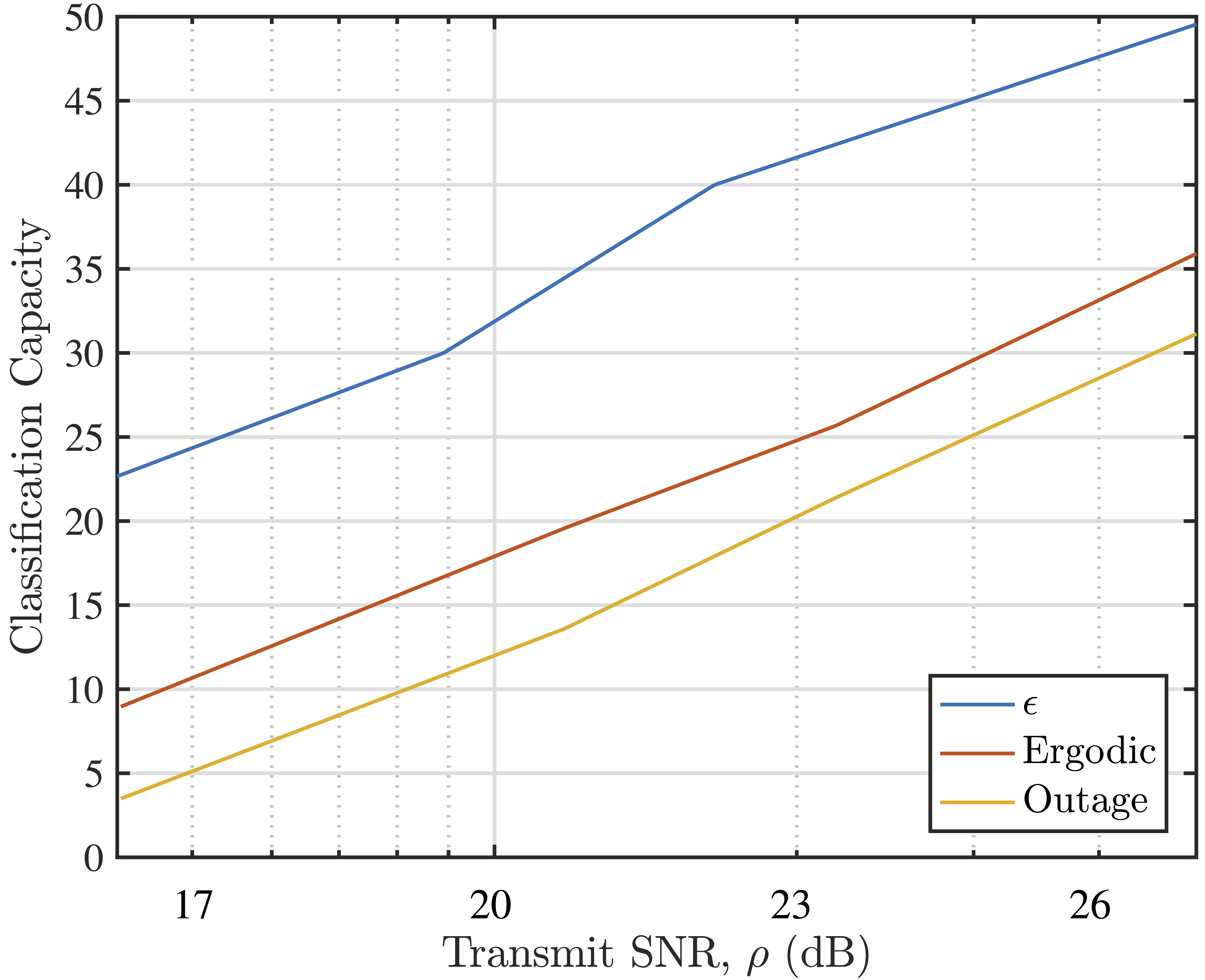}}
      \caption{Comparison of $\epsilon$-classification capacity, ergodic and outage classification capacities in both the channel-selection and random-class cases. }
        \label{fig_dfs_parameter}
  \vspace{-3mm}
\end{figure*}

%\vspace{-3mm}
\subsection{Classification Capacities with Statistical Data Model}
%\vspace{-2mm}
Fig. 2 shows the scalings of classification capacities of a remote-classification system as the communication rates grow and compares different capacity measures as well as the cases of class selection and random classes. The exponential and linear scaling laws of the $\epsilon$-classification capacities as presented in Theorems~\ref{Theorem: Bounds_C2_Capacity} and~\ref{Theorem: Bounds_Classification_Capacity_random} are shown in Fig. 2(a) and (b) to hold even in a practical regime. Note that the small duration of the curves is caused by numerical computation of Grassmannian packing~\cite{ConwayPack}. On the other hand, despite following the correct scaling laws, the bounds on the capacities are not tight due to the combined effect of the looseness of the union bounds on classification error probabilities and the distance bounds related to Grassmannian packing (in the class-selection case). Similar observations can be made on the bounds on ergodic and outage capacities with relevant curves omitted in Fig. 2 to keep the figures simple. Next, one can draw a conclusion from the comparisons in Fig. 2(c) and (d) that channel fading has a significant effect on the capacity of remote classification. For example, for a transmit SNR of $7$ dB, the ergodic capacity (with fading and CSIT) and outage capacity   (with fading but no CSIT) are $74 \%$ and $84\%$ less than the $\epsilon$-classification capacity (without fading), respectively, in the class-selection case; with a transmit SNR of $17$ dB, the losses are $64\%$ and $87\%$ in the random-class case. Last, comparing Fig. 2(a) and (b) reveals a substantial capacity gain due to class selection such as {$4$-time} increase in $\epsilon$-classification capacity at the {communication} rate of $3\times 10^5$ bit/s. The same conclusion holds for other capacity measures by comparing Fig. 2(c) and (d).

%\vspace{-3mm}
\subsection{Classification Capacities with Real Dataset}
%\vspace{-2mm}
Available class subsets are generated by different combinations of classes in the MNIST dataset. The example of three 3-class subsets is illustrated in Fig. 3.
\begin{figure}[h]
\centering
\includegraphics[scale=0.33]{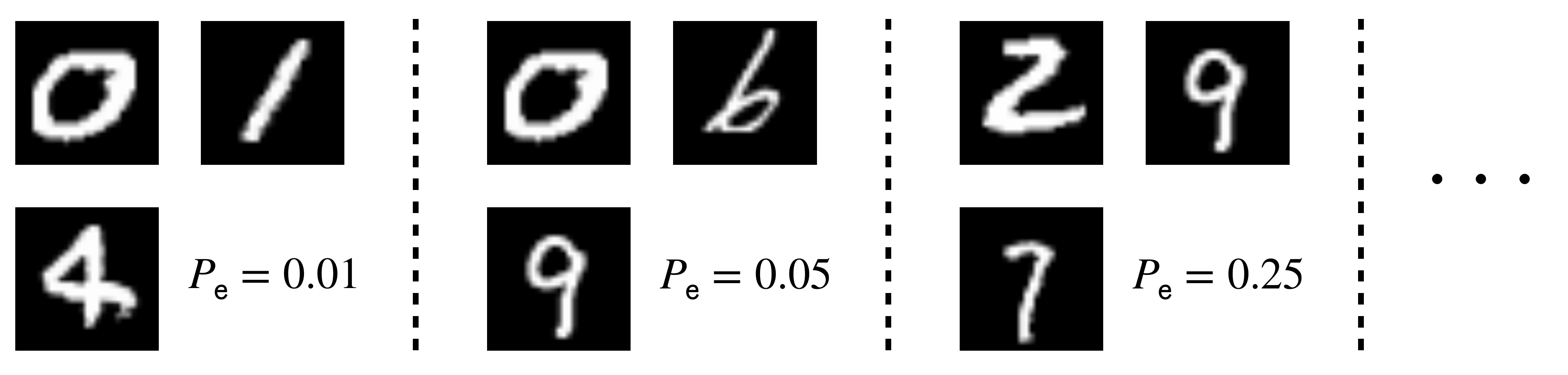}
\label{fig: mnist}
\caption{Example of $3$-class subsets of the MNIST dataset. Their corresponding classification error probabilities are different as specified.}
\end{figure}

The $\epsilon$-classification capacities for the cases of class selection and random classes are compared in Fig. 4. As the dataset is generated by the nature, the selected class subset is no longer generated by Grassmannian packing or the isotropic distribution as assumed in the theoretical analysis. However, we can still observe the capacity gain of class selection from Fig. 4, e.g., $33\%$ capacity gain at the communication rate of $10^{5}$ bit/s. Furthermore, the capacity with class selection scales with a growing communication rate at a rate faster than the random-class case. Both trends are aligned with the analytical results.

\begin{figure}[tt]
\centering
\includegraphics[scale=0.7]{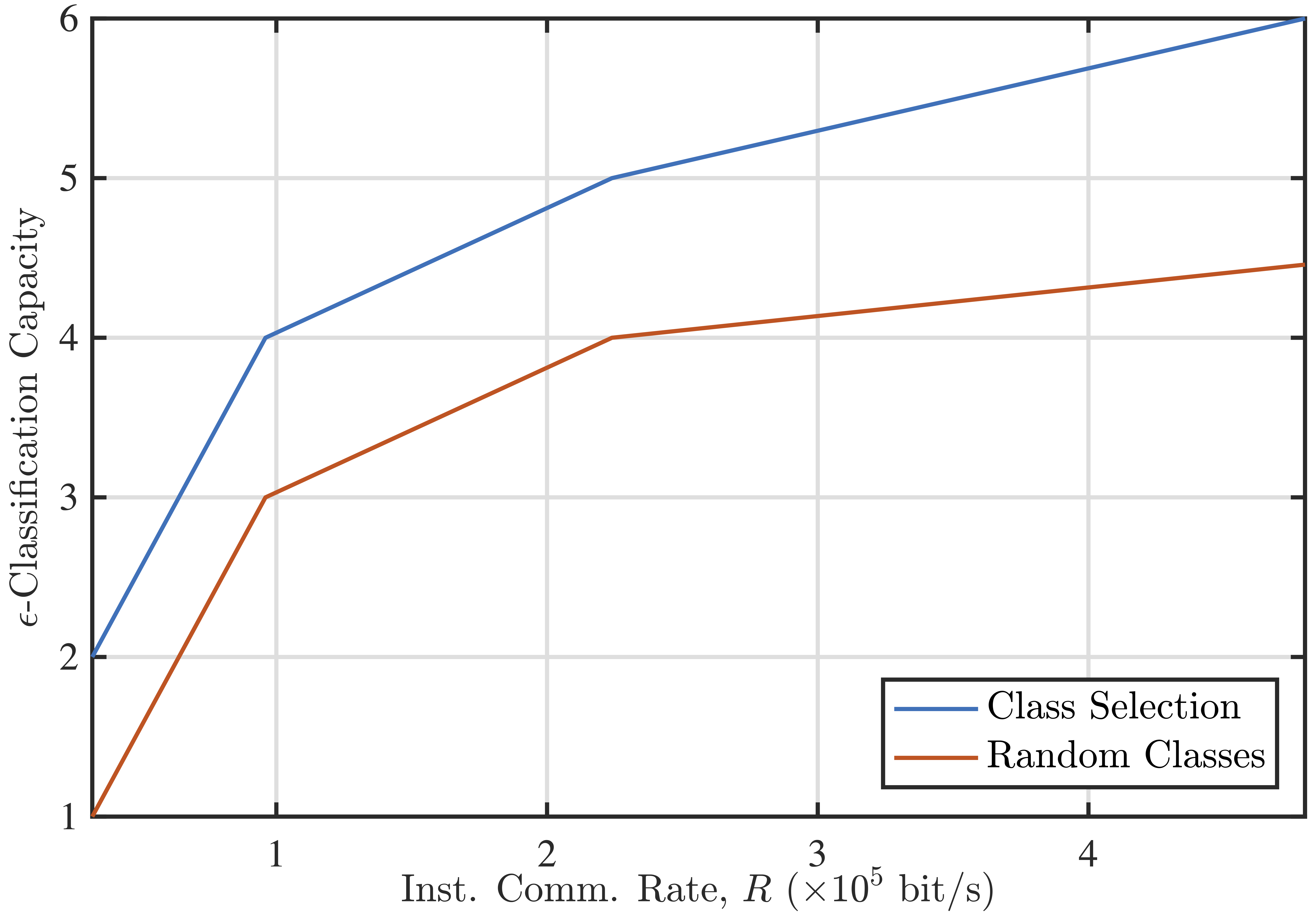}
\caption{Classification capacity comparison between the cases of class-selection and random-class with the MNIST dataset.}
\end{figure}

\section{Concluding Remarks}\label{Sec:Concluding Remarks}

In this work, we have studied the performance of remote classification over wireless channels. The main contribution is the establishment of a relation between classification and communication by proposing various metrics of classification capacities and analyzing them using tools from differential geometry. This has led us to discover that the freedom of choosing object classes for classification under the channel constraint can attain an exponential scaling law of classification capacity w.r.t. the communication rate; without {a deliberate selection}, the scaling is linear.

The current study opens numerous directions for further investigation. Several of them are particularly interesting, including {a realistic latency model, use }of advanced wireless techniques (e.g., MIMO and OFDM) to increase the classification capacity, {as well as} the design of multiuser remote classification system {that gives rise to new issues in terms of, e.g.,} resource allocation and cooperation.

\vspace{-3mm}
\appendix

\subsection{Proof of Lemma~\ref{Lemma:PCEPBound}}\label{Proof: Upper_bound}
First, we prove the upper bound on $P(i\to j)$. As $\omega^2 + a^2_k\geq a^2_k, \forall \omega $, it follows from~\eqref{eqn:exact_pep_v} that
\begin{align}\label{Upper: intermediate}
P(i\to j) &\leq \frac{1}{4\pi}\int_{-\infty}^{\infty}dw \frac{1}{w^2+1/4}\cdot \prod_{\overset{k=1}{\cos\theta^{(i,j)}_k<1}}^{K}\frac{1+\sigma_{\bs}^2}{\sigma_{\bs}^4a^2_k\left(1-\cos^2\theta^{(i,j)}_k\right)},\nn\\
& \leq \frac{1}{2}\prod_{k=1}^{K}\frac{1}{1+g(\sigma_{\bs}^2)\sin^{2}\theta_k^{(i,j)}}  \triangleq P_{\sf ub}.
\end{align}
On the other hand, one can easily verify that 
\begin{equation}
\frac{\partial P_{\sf ub}}{\partial \sin^{2}\theta_k^{(i,j)}}<0 \quad \text{and}\quad \frac{\partial^2 P_{\sf ub}}{\partial \left(\sin^{2}\theta_k^{(i,j)}\right)^2}<0.
\end{equation}
The above results suggest that, given $d^2_{i,j}=\sum_{k=1}^{K}\sin^2\theta^{(i,j)}_k$, $P_{\sf ub}$ is maximized when as many principal angles as possible are equal to zero. Consequently, one can further bound~\eqref{Upper: intermediate} as
\begin{equation}
P(i\to j) \leq \frac{1}{2}\left(\frac{1}{1+g(\sigma^2_{\bs})}\right)^{\lfloor{d^2_{i,j}}\rfloor}.
\end{equation}

Next, we prove the lower bound on $P(i \to j)$. By Lemma~\ref{Lemma: PEP}, 
\begin{equation}
P(i\to j) = \frac{1}{4\pi}\int_{-\infty}^{\infty}dw \frac{1}{w^2+1/4}\cdot \prod_{\overset{k=1}{\cos\theta^{(i,j)}_k<1}}^{K}\frac{1+\sigma_{\bs}^2}{\sigma_{\bs}^4\left(\omega^2 +\frac{1}{4}\right)\sin^2\theta^{(i,j)}_k + 1+\sigma_{\bs}^2}.
\end{equation}
Similarly, following the same argument as before with
\begin{equation}
\frac{\partial P(i\rightarrow j)}{\partial \sin^{2}\theta_k^{(i,j)}}<0 \quad \text{and}\quad \frac{\partial^2 P(i\rightarrow j)}{\partial \left(\sin^{2}\theta_k^{(i,j)}\right)^2}<0,
\end{equation}
$P(i\to j)$ is minimized if all the principal angles have the same value given $d^2_{i,j}=\sum_{k=1}^{K}\sin^2\theta^{(i,j)}_k$. This leads to:
\begin{align}
P(i\to j) &\geq \frac{1}{4\pi}\int_{-\infty}^{\infty}dw \frac{1}{w^2+1/4}\cdot \l(\frac{1}{\frac{4}{K}g(\sigma^2_{\bs})\left(\omega^2 +\frac{1}{4}\right) d^2_{i,j}+ 1}\r)^K,\\
& \geq  \frac{1}{4\pi}\int_{\omega^2 +\frac{1}{4}\leq 1}dw \frac{1}{w^2+1/4}  \cdot \l(\frac{1}{1+ \frac{4}{K}g(\sigma^2_{\bs})\left(\omega^2 +\frac{1}{4}\right) d^2_{i,j}}\r)^K, \nn\\
& \geq \frac{1}{4\pi}\int_{\omega^2 +\frac{1}{4}\leq 1}dw \frac{1}{w^2+1/4}  \cdot \l(\frac{1}{1+\frac{4}{K}g(\sigma^2_{\bs})d^2_{i,j}}\r)^K,\nn\\
& = \frac{1}{\pi} \arctan(\sqrt{3})\l(\frac{1}{1+\frac{4}{K}g(\sigma^2_{\bs})d^2_{i,j}}\r)^K.
\end{align}
This completes the proof.

\vspace{-3mm}
\subsection{Proof of Theorem~\ref{Theorem: Bounds_C2_Capacity}}\label{Proof: Bounds_C2_Capacity}
\vspace{-2mm}
First, we prove the lower bound on the $\epsilon$-classification capacity. By~\eqref{LB: C2_Capacity}, 
\begin{equation}
KL^{-\frac{2}{NK}}\log_2(1+ g(\sigma^2_{\bs}))^{-1} = \log\frac{2\epsilon}{L(1+ g(\sigma^2_{\bs}))}.
\end{equation}

For a high data SNR, it follows from the above equation that
\begin{equation}\label{Eq: C2_Upper}
L \gtrsim 2^{\frac{N}{2}\l(K\log_2K+ K\log_2\log_2(1+ g(\sigma^2_{\bs})) - K\log_2 \log_2\frac{1+g(\sigma^2_{\bs})}{2\epsilon}\r)}.
\end{equation}
Next, we prove the upper bound on the $\epsilon$-classification capacity. From~\eqref{UB: C2_Capacity},
\begin{equation}\label{Eq: UB_appendix_classifcation_capacity}
\frac{1}{3}\l(\frac{1}{1+\frac{4}{K}g(\sigma^2_{\bs})\delta^2_{\sf ub}}\r)^K = \epsilon.
\end{equation}
{As the direct approach is intractable, we find a lower bound on the right-hand side of~\eqref{Eq: UB_appendix_classifcation_capacity}. Using the fact that $ 4KL^{-\frac{2}{NK}} \geq \delta^2_{\sf ub}>1$ for large $N$ and $K$, 
\begin{equation}
\frac{1}{3}\l(\frac{1}{1+\frac{4}{K}g(\sigma^2_{\bs})\delta^2_{\sf ub}}\r)^K \geq\frac{1}{3}\l(\frac{1}{1+16g(\sigma^2_{\bs})L^{-\frac{2}{NK}}}\r)^K, \quad N,K\to \infty.
\end{equation}
Then,~\eqref{Eq: UB_appendix_classifcation_capacity} asymptotically reduces to
$\frac{1}{3}\l(\frac{1}{1+16g(\sigma^2_{\bs})L^{-\frac{2}{NK}}}\r)^K \approx \epsilon$, as $N,K \to \infty$.
This results in an asymptotic upper bound on the $\epsilon$-classification capacity:
\begin{equation}
L \lesssim 2^{\frac{N}{2}\l(K\log_24K + K\log_2\frac{4g(\sigma^2_{\bs})}{1-3\epsilon}\r)}.
\end{equation}
The substituting of $N = \beta R$ gives \eqref{Eq: bounds_classification_capacity}. Furthermore, as $R,K \to \infty$, the bounds on the $\epsilon$-classification capacity scale in~\eqref{Eq: scaling_classification_capacity_class-selection}, which completes the proof.

%-----------------------------------------------------------------------------------------------
\vspace{-3mm}
\subsection{Proof of Proposition~\ref{Proposition: bounds_Ergodic_capacity}}\label{Proof: Bounds_Ergodic_Capacity}
\vspace{-3mm}
1) Bounds on ergodic classification capacity: The lower bound in Theorem~\ref{Theorem: Bounds_C2_Capacity} can be rewritten as $  C^{*}(R) \gtrsim 2^{\frac{R}{B}\cdot \gamma_{\sf lb}}$, 
where $\gamma_{\sf lb} = \frac{\beta B}{2}\l(K\log_2{K}+ Kc_{\sigma_{\bs}} - Kc_{\epsilon}\r)$. It follows that
\begin{equation}\label{Eq: appendx_1}
\mathbb E[C^{*}(R)] \gtrsim  \mathbb E\l[\l(1+\rho|h|^2\r)^{\gamma_{\sf lb}}\r].
\end{equation} 
For a high SNR, 
\begin{equation}\label{Approx: high_SNR}
\mathbb E\l[\l(1+\rho|h|^2\r)^{\gamma_{\sf lb}}\r] \approx \rho^{\gamma_{\sf lb}} \mathbb E[|h|^{2\gamma_{\sf lb}}] \overset{(a)}{=} \Gamma(\gamma_{\sf lb}+1), \quad \rho \to \infty,
\end{equation}
where $(a)$ uses $|h|^2 = \exp(1)$ and $\Gamma(\cdot)$ denotes the gamma function. Given large $\gamma_{\sf lb}$ and using the \emph{stirling's apporximation}
\begin{equation}\label{Eq: appendx_2}
 \mathbb E[|h|^{2\gamma_{\sf lb}}]  \approx \sqrt{2\pi\gamma_{\sf lb}}\cdot  e^{\gamma_{\sf lb}(\log \gamma_{\sf lb}-1)},\quad  \gamma_{\sf lb} \gg 1.
\end{equation}
Combining the above result with~\eqref{Eq: appendx_1} and~\eqref{Approx: high_SNR}, ~\eqref{Eq: Lower_bound_ergodic_capacity} follows. Following the same procedure, the upper bound can be proved. 

2) Scaling law: Consider the ergodic communication rate $\bar{R}  = \mathbb E[B\log_2(1+\rho|h|^2)]$, $\bar{R} \to B\log_2\rho + B\mathbb E[\log_2|h|^2]$ and hence 
\begin{equation}
\lim_{\rho \to \infty}\frac{\bar{R}}{\log_2\rho} = B,\quad \rho \to \infty,
\end{equation}
implying $\bar{R} \to \infty$ as $\rho\to \infty$. Then, given sufficiently large $\rho$ and furthermore letting $K \to \infty$, both the derived bounds in~\eqref{Eq: Lower_bound_ergodic_capacity} scale as shown in~\eqref{Eq: scaling_ergodic_classification_capacity_class-selection} . This completes the proof.

\subsection{Proof of Proposition~\ref{Proposition: outage_classification_capacity}}\label{Proof: scaling_outage_Capacity}

The bounds in~\eqref{eqn: C2capacity_LB_fixed_model} are straightforward by substituting $R_{\delta}$ in ~\eqref{Eq: outage} into the outage classification capacity defined in~\eqref{Def: Outage_Classification_Capacity}. In the following, we prove the scaling law. To begin with, we show that as $\rho \to \infty$, $R_{\delta} \to \infty$. Given $R_{\delta} = \log_2\l(1 + \rho\log\l(\frac{1}{1-\delta}\r)\r)$, at a high SNR, one can have
\begin{equation}
R_{\delta} \approx  \log_2\l( \rho\log\l(\frac{1}{1-\delta}\r)\r) = \log_2\rho + \log_2\log\frac{1}{1-\delta}, \quad \rho \rightarrow \infty.
\end{equation}
This implies that $R_{\delta}$ scales linearly with $\log_2\rho$ given $\delta$. As a result, as $\rho \to \infty$, $R_{\delta} \to \infty$. Then, by letting $R_{\delta}, K\to \infty$, both bounds scale as shown in~\eqref{Eq: scaling_outage_classification_capacity_class-selection}, which completes the proof.

\subsection{Proof of Lemma~\ref{lemma:CDF_subspace_disntance_bounds}}\label{sec:appendix_proof_CDF_subspace_disntance_bounds}

To derive the upper bound on the CDF, namely $F_{d_{c}^2}(x)$, we fist obtain an upper bound on the probability of $\Pr(\sin^2\theta_{\max} < x)$. Given~\eqref{eqn:PDF_of_sin_square_theta_subspace},
\begin{equation}
\Pr(\theta_{\max} < x)=\frac{\Gamma\left(\frac{K+1}{2}\right)\Gamma\left(\frac{N-K+1}{2}\right)}{\sqrt{\pi}\Gamma\left(\frac{N+1}{2}\right)}(\sin x)^{K(N-K)} \ \tensor*[_2]{F}{_1}\l(\frac{N-K}{2}, \frac{1}{2};\frac{N + 1}{2};\sin^2x{\bf I}_{K}\r).
\end{equation}
Due to the fact that $\Pr(\sin^2\theta_{\max} < x) = \Pr(\theta_{\max} < \arcsin \sqrt{x})$, one can have 
\begin{align}\label{Ineq: relax_CDF_theta}
\Pr(\sin^2\theta_{\max} < x)  & = \frac{\Gamma\left(\frac{K+1}{2}\right)\Gamma\left(\frac{N-K+1}{2}\right)}{\sqrt{\pi}\Gamma\left(\frac{N+1}{2}\right)}x^{\frac{K(N-K)}{2}} \ \tensor*[_2]{F}{_1}\l(\frac{N-K}{2}, \frac{1}{2};\frac{N + 1}{2};x{\bf I}_{K}\r),\nn\\
& \leq  \frac{\Gamma\left(\frac{K+1}{2}\right)\Gamma\left(\frac{N-K+1}{2}\right)}{\sqrt{\pi}\Gamma\left(\frac{N+1}{2}\right)}x^{\frac{K(N-K)}{2}} \ \tensor*[_2]{F}{_1}\l(\frac{N-K}{2}, \frac{1}{2};\frac{N + 1}{2};{\bf I}_{K}\r),
\end{align} 
where the inequality uses the fact that $\tensor*[_2]{F}{_1}$ is a non-decreasing function in $x$. On the other hand, according to~\cite{Richards2020Arxiv}, one can have

\begin{equation}\label{eq: hyper_to_gamma}
\ \tensor*[_2]{F}{_1}\l(\frac{N-K}{2}, \frac{1}{2};\frac{N + 1}{2};{\bf I}_{K}\r) = \frac{\Gamma\left(\frac{N+1}{2}\right)\Gamma\left(\frac{1}{2}\right)}{\Gamma\l(\frac{K+1}{2}\r)\Gamma\left(\frac{N-K+1}{2}\right)}.
\end{equation}
Then, by substituting~\eqref{eq: hyper_to_gamma} into~\eqref{Ineq: relax_CDF_theta}, 
\begin{equation}\label{Upper:CDF_sin_theta}
\Pr(\sin^2\theta_{\max} < x)  = \Pr(\theta_{\max} < \arcsin x) \leq x^{\frac{K(N-K)}{2}}.
\end{equation}  

As two random subspaces of dimension $K$ embedded in $\mathbb R^{N}$ are quasi-orthogonal, given large $N$, the squared chordal distance $d^2_c$ can be approximated as $K\sin^2\theta_{\max}$. Then, we can bound the said CDF, namely $F_{d_{c}^2}(x)$, as 
\begin{align}\label{Eq: appendix_E}
F_{d_{c}^2}(x) = \Pr(d^2_c<x) \approx \Pr(K\sin^2\theta_{\max}<x) = \Pr\l(\sin^2\theta_{\max}<\frac{x}{K}\r), \quad N \to \infty.
\end{align}
By combining~\eqref{Upper:CDF_sin_theta} and ~\eqref{Eq: appendix_E}, the desired result follows.

\vspace{-3mm}
\subsection{Proof of Lemma~\ref{lemma:Expectation_PEP_bounds}}
\vspace{-3mm}
\label{sec:appendix_proof_Expectation_PEP_bounds}
It follows from~\eqref{Upper: PEP} that 

\begin{eqnarray}\label{eqn:app_mean_PEP_UB_unsimplified}
\mathbb E\left[P(i\to j)\right] &\leq& \mathbb E_{d_{c}^2}\left[\frac{1}{2}\left(\frac{1}{1+g(\sigma_{\bs}^2)}\right)^{\lfloor d^2_{i,j}\rfloor}\right] = \int_{0}^{K}\frac{1}{2}\left(\frac{1}{1+g(\sigma_{\bs}^2)}\right)^{\lfloor x \rfloor} \text{PDF}_{d^2_{c}}(x)dx,\nonumber \\
&\leq&\frac{1}{2}\left(\frac{1}{1+g(\sigma_{\bs}^2)}\right)^{K} + \frac{\log(1+g(\sigma_{\bs}^2))}{2}\int_{0}^{K}F_{d^2_{c}}^{\sf{ub}}(x)\left(\frac{1}{1+g(\sigma_{\bs}^2)}\right)^{x-1}dx.
\end{eqnarray}
Combining~\eqref{eqn:CDF_subspace_distance_UB} and~\eqref{eqn:app_mean_PEP_UB_unsimplified}, gives
\begin{align}
\mathbb E\left[P(i\to j)\right] \leq \frac{1}{2}\left(\frac{1}{1+g(\sigma_{\bs}^2)}\right)^{K} +\frac{K\log(1+g(\sigma_{\bs}^2))}{2}\int_{0}^{1}x^{\frac{K(N-K)}{2}}\frac{1}{\left[1+g(\sigma_{\bs}^2)\right]^{Kx-1}}dx.\label{Ineq:pep_upper_temp}
\end{align}
We decompose the second term at the RHS of~\eqref{Ineq:pep_upper_temp} as follows
\begin{align}
& \frac{K\log(1+g(\sigma_{\bs}^2))}{2}\l[\int_{0}^{\frac{2}{K}}x^{\frac{K(N-K)}{2}}\left(\frac{1}{1+g(\sigma_{\bs}^2)}\right)^{Kx-1}dx + \int_{\frac{2}{K}}^{1}x^{\frac{K(N-K)}{2}}\left(\frac{1}{1+g(\sigma_{\bs}^2)}\right)^{Kx-1}dx\r]\nn\\
& \leq \frac{K\log(1+g(\sigma_{\bs}^2))}{2}\l[ \left(1+g(\sigma_{\bs}^2)\right)\int_{0}^{\frac{2}{K}}x^{\frac{K(N-K)}{2}}dx + \left(\frac{1}{1+g(\sigma_{\bs}^2)}\right)\int_{\frac{2}{K}}^{1}x^{\frac{K(N-K)}{2}}dx\r]\label{Eq:second_term}.
\end{align}
For large $N$, \eqref{Eq:second_term} can be asymptotically expressed as $\frac{\log(1+g(\sigma_{\bs}^2))}{1+g(\sigma_{\bs}^2)}\frac{1}{N}$. Substituting it into~\eqref{Ineq:pep_upper_temp}, \eqref{inqe:avg_PEP_UB} follows. This completes the proof.

\newpage
%------------------------------------------

\end{document}